\documentclass[11pt,fleqn]{article}

\usepackage{amsmath,amssymb,amsthm,enumerate} 

\flushbottom
\setlength{\textwidth}{160.0mm}
\setlength{\textheight}{240.0mm}
\setlength{\oddsidemargin}{0mm}
\setlength{\evensidemargin}{0mm}
\setlength{\topmargin}{-20mm} 
\setlength{\parindent}{5.0mm}

\tolerance=8888

\newcommand{\p}{\partial}

\newcommand{\rank}{\mathop{\rm rank}\nolimits}
\newcommand{\sign}{\mathop{\rm sign}\nolimits}
\newcommand{\CL}{\mathop{\rm CL}\nolimits}
\newcommand{\ord}{\mathop{\rm ord}\nolimits}

\marginparwidth=17mm \marginparsep=1mm \marginparpush=4mm
\newcommand{\todo}[1][\null]{\ensuremath{\clubsuit}}

\newcommand{\noprint}[1]{}

\newtheorem{theorem}{Theorem}
\newtheorem{lemma}{Lemma}
\newtheorem{corollary}{Corollary}
\newtheorem{proposition}{Proposition}
{\theoremstyle{definition} 
\newtheorem{example}{Example}
\newtheorem{remark}{Remark}
}

\begin{document}

\par\noindent {\LARGE\bf
Conservation Laws and Normal Forms\\ of Evolution Equations\par}

{\vspace{4mm}\par\noindent Roman O. POPOVYCH~$^\dag$ and Artur
SERGYEYEV~$^\ddag$
\par\vspace{2mm}\par}

{\vspace{2mm}\par\noindent\it $^{\dag}$~Fakult\"at f\"ur Mathematik, Universit\"at Wien, Nordbergstra{\ss}e 15, A-1090 Wien, Austria \par}

{\par\noindent \it $\phantom{^{\dag}}$~Institute of Mathematics of NAS of Ukraine, 3 Tereshchenkivska Str., Kyiv-4, Ukraine \par}

{\vspace{2mm}\par\noindent \it $^{\ddag}$~Mathematical Institute, Silesian University in Opava, Na Rybn\'\i{}\v{c}ku 1, 746 01 Opava,\\
$\phantom{^{\dag}}$~Czech Republic \par}

{\vspace{2mm}\par\noindent $\phantom{^{\dag,\ddag}}$\rm E-mail: \it  $^\dag$rop@imath.kiev.ua, $^\ddag$Artur.Sergyeyev@math.slu.cz
 \par}

\vspace{-3ex}



\begin{abstract}
We study local conservation laws for evolution
equations in two independent variables. In particular, we present
normal forms for the equations admitting one or two low-order
conservation laws. Examples include Harry Dym equation,
Korteweg--de-Vries-type equations, and Schwarzian KdV equation. It is also shown that for linear
evolution equations all their conservation laws are (modulo trivial conserved vectors)
at most quadratic in the dependent variable and its derivatives.
\end{abstract}

\vspace{-3ex}

\section{Introduction}

The role played in the sciences by linear and nonlinear evolution
equations and, in particular, by conservation laws thereof, is hard
to overestimate (recall e.g.\ linear and nonlinear Schr\"odinger
equations and the Korteweg--de Vries (KdV) equation in physics,
reaction-diffusion systems in chemistry and biology, and the
Black--Scholes equation in the finance, to name just a few). For
instance, the discovery of higher conservation laws for the
KdV equations provided an important milestone on the way that has eventually lead to the
discovery of the inverse scattering transform and the modern theory
of integrable systems, see e.g.\ \cite{Miura&Gardner&Kruskal1968, Newell1985}. 
However, the theory of conservation laws for
evolution equations is still far from being complete even for the
simplest case of two independent variables, and in the present
paper we address some issues of the theory in question for this very
case.

We shall deal with an evolution equation in two independent variables,
\begin{equation}\label{EqGenEvol}
u_t=F(t,x,u_0,u_1,\dots,u_n),\quad n\ge 2,\quad F_{u_n}\neq 0,
\end{equation}
where $u_j\equiv \p^j u/\p x^j$, $u_0\equiv u$, and $F_{u_j}=\p F/\p u_j$.
We shall also employ, depending on convenience or necessity, 
the following notation for low-order derivatives: 
$u_x=u_1$, $u_{xx}=u_2$, and $u_{xxx}=u_3$.

There is a considerable body of results on conservation laws of
evolution equations of the form~\eqref{EqGenEvol}. 
For instance, in the seminal paper~\cite{Bryant&Griffiths1995b} the authors studied,
{\em inter alia}, conservation laws of Eq.~\eqref{EqGenEvol} with $\p
F/\p t=0$ for $n=2$. They proved that the possible dimensions of
spaces of inequivalent conservation laws for such equations are 0,
1, 2 and $\infty$, and described the equations possessing spaces of
conservation laws of these dimensions (the precise definitions of
equivalence and order of conservation laws are given in the next
section). These results were further generalized in
\cite{Popovych&Samoilenko2008} for the case when $F$ explicitly depends on $t$.

Important results on conservation laws of \eqref{EqGenEvol},
typically under the assumptions of polynomiality and
$t,x$-independence of $F$ and of the conservation laws themselves,
were obtained in \cite{ag1, ag2, ag3, ag4, fol1, fol2, gal, kap}.
However, for general Eq.~\eqref{EqGenEvol} there is
no simple picture analogous to that of the second-order case
discussed above. For instance, unlike the second-order case, there
exist odd-order evolution equations that possess infinitely many
inequivalent conservation laws of increasing orders without being
linearizable. Rather, such equations are integrable via the inverse
scattering transform, the famous KdV equation providing a prime
example of such behavior, see e.g.\ \cite{Ibragimov1985, kap, Miura&Gardner&Kruskal1968}
and references therein; for the fifth-order equations see
\cite{dss}.

Note that many results on symmetries and conservation laws were
obtained using the formal symmetry approach and modifications
thereof, see e.g.\ the recent survey \cite{miksok} and references
therein, in particular \cite{miksok0,svso}. For instance, it was shown that
an equation~\eqref{EqGenEvol} of even order ($n=2m$) has no
conservation laws (modulo trivial ones) of order greater than $m$,
see \cite{ag1, fol1, Ibragimov1985, ig} for details. There also
exists a closely related approach to the study of symmetries and
conservation laws of evolution equations, the so-called symbolic
method, see \cite{mnw, sw1, sw2, sw3} and references therein for details.

However, many important questions concerning the conservation laws of
evolution equations were not answered so far. For example, we are not
aware of any significant advances in the study of normal forms of
evolution equations admitting low-order conservation laws considered
in \cite{Bryant&Griffiths1995b, fol2, Popovych&Samoilenko2008}. In
the present paper we provide such normal forms with respect to
contact or point transformations for equations
admitting one or two low-order conservation laws, respectively,
see Theorems~\ref{TheoremOnOneConsLawOfEvolEqs} and \ref{TheoremOnTwoConsLawOfEvolEqs} below.
Let us stress that in what follows we restrict ourselves to considering only
{\em local} conservation laws whose densities and fluxes
depend only on the independent and dependent variables and a finite number
of the derivatives of the latter.

The complete description of conservation laws for {\em linear}
evolution equations with $t,x$-dependent coefficients was also missing so far.
Below we show that linear even-order equations of the form~\eqref{EqGenEvol}
can only possess conservation laws linear in
$u_j$ for all $j=0,1,2,\dots$ while the odd-order equations can further
admit the conservation laws (at most) quadratic in $u_j$, see
Theorems~\ref{lin-cl-th1} and \ref{lin-cl-th2},
Corollary~\ref{peolin} and Theorem~\ref{poolin} below. This
naturally generalizes some earlier results from \cite{ag3,gal}; cf.\ also \cite{ashton}.
The generation of linear and quadratic conservation laws for linear differential equations
is also discussed in some depth in \cite[Section~5.3]{Olver1993}.

Below we denote by $\CL(\mathcal E)$ the space of local conservation
laws of~$\mathcal E$ (cf.\ Section~\ref{SectionOnAuxiliaryStatements}), where
$\mathcal E$ denotes a fixed equation from the class~\eqref{EqGenEvol}.
In what follows $D_t$ and $D_x$ stand for
the total derivatives (see e.g.\ \cite{Olver1993} for details)
with respect to the variables~$t$ and~$x$,
\[
D_t=\p_t+u_t\p_u+u_{tt}\p_{u_t}+u_{tx}\p_{u_x}+\cdots,\quad
D_x=\p_x+u_x\p_u+u_{tx}\p_{u_t}+u_{xx}\p_{u_x}+\cdots.
\]
As usual, the subscripts like $t$, $x$, $u$, $u_x$, etc.\ stand for the partial
derivatives in the respective variables.

\section{Admissible transformations of evolution equations}%
\label{SectionOnAdmissibleContactTransformationsOfEvolutionEquations}

The contact transformations mapping an equation from class~\eqref{EqGenEvol} into
another equation from the same class are well known \cite{Magadeev1993}
to have the form
\begin{equation}\label{EqContactTransOfGenEvolEqs}
\tilde t=T(t), \quad
\tilde x=X(t,x,u,u_x), \quad
\tilde u=U(t,x,u,u_x).
\end{equation}
The functions~$T$, $X$ and $U$ must satisfy the nondegeneracy assumptions, namely, $T_t\ne0$ and
\begin{equation}\label{EqNondegeneracyAssumptionForContactTransOfGenEvolEqs}
\rank\left(\begin{array}{ccc}X_x&X_u&X_{u_x}\\U_x&U_u&U_{u_x}\end{array}\right)=2,
\end{equation}
and the contact condition
\begin{equation}\label{EqContactConditionForContactTransOfGenEvolEqs}
(U_x+U_uu_x)X_{u_x}=(X_x+X_uu_x)U_{u_x}.
\end{equation}
The transformation~\eqref{EqContactTransOfGenEvolEqs} is uniquely
extended to the derivative $u_x$ and to the higher derivatives by the formulas
$\tilde u_{\tilde x}=V(t,x,u,u_x)$ and $\tilde u_k\equiv\p^k \tilde u/\p \tilde x^k=((1/D_xX)D_x)^kV$,
where
\[
V=\frac{U_x+U_uu_x}{X_x+X_uu_x}\quad\mbox{or}\quad V=\frac{U_{u_x}}{X_{u_x}}
\]
if $X_x+X_uu_x\ne0$ or $X_{u_x}\ne0$, respectively; the possibility of simultaneous vanishing
of these two quantities is ruled out by (\ref{EqNondegeneracyAssumptionForContactTransOfGenEvolEqs}).

{\samepage
The transformed equation \eqref{EqGenEvol} reads $\tilde u_{\tilde t}=\tilde F$ where
\begin{equation}\label{EqTransRightHandSideOfGenEvolEqs}
\tilde F=\frac{U_u-X_uV}{T_t}F+\frac{U_t-X_tV}{T_t},
\end{equation}
and $(X_u,U_u)\ne(0,0)$ because of~\eqref{EqNondegeneracyAssumptionForContactTransOfGenEvolEqs}
and~\eqref{EqContactConditionForContactTransOfGenEvolEqs}.

}

Any transformation of the
form~\eqref{EqContactTransOfGenEvolEqs} leaves the
class~\eqref{EqGenEvol} invariant, and therefore its extension to an
arbitrary element~$F$ belongs to the contact equivalence
group~$G^\sim_{\rm c}$ of class~\eqref{EqGenEvol}, so there are no
other elements in~$G^\sim_{\rm c}$. In other words, the equivalence
group~$G^\sim_{\rm c}$ generates the whole set of admissible contact
transformations in the class~\eqref{EqGenEvol}, i.e., this class is
normalized with respect to contact transformations, see
\cite{Popovych&Kunzinger&Eshraghi2006} for details.

The above results can be summarized as follows.
\begin{proposition}
The class of equations \eqref{EqGenEvol} is contact-normalized.
The contact equivalence group~$G^\sim_{\rm c}$ of the class~\eqref{EqGenEvol} is formed by the transformations~\eqref{EqContactTransOfGenEvolEqs},
satisfying conditions~\eqref{EqNondegeneracyAssumptionForContactTransOfGenEvolEqs} and~\eqref{EqContactConditionForContactTransOfGenEvolEqs}
and prolonged to the arbitrary element~$F$ by~\eqref{EqTransRightHandSideOfGenEvolEqs}.
\end{proposition}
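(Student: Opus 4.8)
The plan is to prove, in order: (i) every contact transformation that maps some equation of the form~\eqref{EqGenEvol} onto another equation of this form must have the shape~\eqref{EqContactTransOfGenEvolEqs} with $T_t\ne0$ and must satisfy~\eqref{EqNondegeneracyAssumptionForContactTransOfGenEvolEqs} and~\eqref{EqContactConditionForContactTransOfGenEvolEqs}; (ii) conversely, every transformation of this shape maps the \emph{whole} class~\eqref{EqGenEvol} into itself, with $F$ transforming by~\eqref{EqTransRightHandSideOfGenEvolEqs}; and (iii) the description of admissible contact transformations so obtained does not involve the source element $F$, which is precisely the assertion that the class is contact-normalized and that $G^\sim_{\rm c}$ consists exactly of the transformations listed.

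For step~(i) I would start from an arbitrary contact transformation on the space of the variables $t,x,u$ together with the first-order derivatives $u_t,u_x$. By B\"acklund's theorem, a genuine (non-point) contact transformation of a space with one dependent variable mixes that variable with all of its first derivatives; imposing that the image of \emph{any} equation solved for $u_t$ be again solvable for $\tilde u_{\tilde t}$, with a right-hand side containing neither $\tilde u_{\tilde t}$ nor $\tilde t$-derivatives, forces $\tilde t$ to depend on $t$ alone (with $T_t\ne0$) and $\tilde x$, $\tilde u$ — hence also $V=\tilde u_{\tilde x}$ — to depend only on $t,x,u,u_x$. This is the classification of~\cite{Magadeev1993}; it can alternatively be re-derived by writing out the contact relations in the $(t,x,u)$-jet and discarding the branches incompatible with the evolutionary form. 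Conditions~\eqref{EqNondegeneracyAssumptionForContactTransOfGenEvolEqs} and~\eqref{EqContactConditionForContactTransOfGenEvolEqs} are then exactly what is needed for~\eqref{EqContactTransOfGenEvolEqs} to be an invertible contact transformation.

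For step~(ii) I would prolong~\eqref{EqContactTransOfGenEvolEqs} to the $\tilde x$-derivatives via $\tilde u_k=((1/D_xX)D_x)^kV$, as recalled above, and use the classical fact that the prolongation of a contact transformation is a local diffeomorphism of $k$-jets for every $k$; in particular, for $k\ge2$ the coordinate $\tilde u_k$ depends nontrivially on $u_k$. Applying $D_t$ to $\tilde u=U(t,x,u,u_x)$, substituting $u_t=F$ and $D_xF$ for $u_{tx}$, and dividing by $T_t$ produces $\tilde u_{\tilde t}=\tilde F$ with $\tilde F$ given by~\eqref{EqTransRightHandSideOfGenEvolEqs}. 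Re-expressing $\tilde F$ in the coordinates $\tilde t,\tilde x,\tilde u_0,\dots,\tilde u_n$, one checks that it is free of $\tilde t$-derivatives (the additive term $(U_t-X_tV)/T_t$ depends only on $t,x,u,u_x$) and that its top-order part comes from $\frac{U_u-X_uV}{T_t}\,F$; since $F_{u_n}\ne0$ and $U_u-X_uV\ne0$ — the latter a short linear-algebra consequence of~\eqref{EqNondegeneracyAssumptionForContactTransOfGenEvolEqs} and~\eqref{EqContactConditionForContactTransOfGenEvolEqs} — the transformed equation again lies in~\eqref{EqGenEvol} and has the same order $n$.

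Finally, for step~(iii): since steps~(i)--(ii) describe the set of all admissible contact transformations of the class by formulas in which the arbitrary element $F$ does not appear, this set coincides with the prolongation of $G^\sim_{\rm c}$ to $F$, i.e.\ the class is normalized with respect to contact transformations in the sense of~\cite{Popovych&Kunzinger&Eshraghi2006}, and $G^\sim_{\rm c}$ is exactly the group of transformations~\eqref{EqContactTransOfGenEvolEqs} subject to~\eqref{EqNondegeneracyAssumptionForContactTransOfGenEvolEqs}--\eqref{EqContactConditionForContactTransOfGenEvolEqs} and prolonged by~\eqref{EqTransRightHandSideOfGenEvolEqs}. The main obstacle is step~(i): one must rule out \emph{cleanly and without gaps} every contact transformation not of the form~\eqref{EqContactTransOfGenEvolEqs}, in particular those in which $\tilde t$ genuinely involves $x,u,u_x$ or which act nontrivially on $u_t$, and it is here that the evolutionary structure — the equation being solved for $u_t$ with no $t$-derivatives and no $u_t$ on the right-hand side — does the essential work; the remaining steps are direct, if somewhat lengthy, jet-space computations.
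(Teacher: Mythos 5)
Your proposal matches the paper's argument: the paper likewise takes the form \eqref{EqContactTransOfGenEvolEqs} with conditions \eqref{EqNondegeneracyAssumptionForContactTransOfGenEvolEqs}--\eqref{EqContactConditionForContactTransOfGenEvolEqs} from the known classification of \cite{Magadeev1993}, prolongs it to obtain \eqref{EqTransRightHandSideOfGenEvolEqs} (noting the nonvanishing of the coefficient of $F$), and concludes normalization from the fact that every such transformation maps the whole class into itself while the description is independent of the arbitrary element~$F$. So the proof is correct and essentially the same as the paper's.
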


Furthermore, the class~\eqref{EqGenEvol} is also point-normalized.
The point equivalence group~$G^\sim_{\rm p}$ of this class consists of the transformations
of the form
\begin{equation}\label{EqPointTransOfGenEvolEqs}
\tilde t=T(t),\quad \tilde x=X(t,x,u),\quad \tilde u=U(t,x,u),\quad
\tilde F=\frac\Delta{T_tD_xX}F+\frac{U_tD_xX-X_tD_xU}{T_tD_xX},
\end{equation}
where $T$, $X$ and $U$ are arbitrary smooth functions that satisfy the nondegeneracy conditions
$T_t\ne0$ and $\Delta=X_xU_u-X_uU_x\ne0$.

Notice that there exist subclasses of the class~\eqref{EqGenEvol} whose
sets of admissible contact transformations are {\em exhausted} by
point transformations.

In the present paper we do  not consider more general transformations, e.g., differential substitutions
such as the Cole--Hopf transformation.

\section{Some basic results on conservation laws}%
\label{SectionOnAuxiliaryStatements}

It is well known that for any evolution equation
\eqref{EqGenEvol} we can assume without loss of generality  that
the associated quantities like symmetries, cosymmetries, densities,
etc., can be taken to be independent of the $t$-derivatives or
mixed derivatives of $u$. We shall stick to this assumption
throughout the rest of the paper.

Following \cite{Olver1993} we shall refer to a (smooth) function of
$t$, $x$ and a finite number of $u_j$ as to a {\em differential
function}. Given a differential function $f$, its {\em order}
(denoted by $\ord f$) is the greatest integer $k$ such
that $f_{u_k}\neq 0$ but $f_{u_j}=0$ for all $j>k$.
For $f=f(t,x)$ we assume that $\ord f=0$.

Thus, for a (fixed) evolution equation \eqref{EqGenEvol}, which we denote
by $\mathcal E$ as before, we lose no generality \cite{Olver1993} in
considering only the {\em conserved vectors} of the form $(\rho,\sigma)$,
where $\rho$ and $\sigma$ are differential functions which satisfy the condition
\begin{equation}\label{cl0}
D_t\rho+D_x\sigma=0 \bmod \check{\mathcal E},
\end{equation}
and $\check{\mathcal E}$ means the equation $\mathcal E$ together with all its differential consequences.
Here $\rho$ is the {\em density} and $\sigma$ is the {\em flux} for the conserved vector $(\rho,\sigma)$.
Let
\[
\frac{\delta}{\delta u}=\sum\limits_{i=0}^{\infty}(-D_x)^i\p_{u_i}, \quad
f_*=\sum\limits_{i=0}^{\infty}f_{u_i}D_x^i, \quad
f_*^\dagger=\sum\limits_{i=0}^{\infty}(-D_x)^i\circ f_{u_i},
\]
denote the operator of variational derivative, the Fr\'echet derivative of a differential function~$f$,
and its formal adjoint, respectively.
With this notation in mind we readily infer that the condition~\eqref{cl0} can be rewritten as $\rho_t+\rho_* F+D_x\sigma=0$.
As $\rho_* F=F\delta\rho/\delta u+D_x\zeta$ for some differential function~$\zeta$,
see e.g.\ \cite[Section~22.5]{Ibragimov1985},
there exists a differential function~$\Psi$ (in fact, $\Psi=-\zeta-\sigma$) such that
\begin{equation}\label{cl0a}
\rho_t+F\frac{\delta\rho}{\delta u}=D_x\Psi.
\end{equation}

A conserved vector $(\rho,\sigma)$ is called \emph{trivial} if it satisfies the condition $D_t\rho+D_x\sigma=0$ on the entire jet space.
It is easily seen that the conserved vector $(\rho,\sigma)$ is trivial
if and only if $\rho\in\mathop{\rm Im} D_x$, i.e., there exists a
differential function $\zeta$ such that $\rho=D_x\zeta$.
Two conserved vectors are {\em equivalent} if they differ by a trivial conserved vector.
We shall call a {\em conservation law} of~$\mathcal E$
an equivalence class of conserved vectors of~$\mathcal E$.
The set $\CL(\mathcal E)$ of conservation laws of~$\mathcal E$ is a vector space,
and the zero element of this space is the conservation law being the equivalence class of
trivial conserved vectors. This is why nonzero conservation laws are usually called {\em nontrivial}.

For any conservation law~$\mathcal L$ of~$\mathcal E$ there exists a unique differential function $\gamma$
called the {\em characteristic} of $\mathcal L$ such that
for any conserved vector $(\rho,\sigma)$ associated with $\mathcal L$ (we shall write this as $(\rho,\sigma)\in \mathcal L$)
there exists a trivial conserved vector $(\tilde\rho,\tilde\sigma)$ satisfying the condition
\begin{equation}\label{clch}
D_t(\rho+\tilde\rho)+D_x(\sigma+\tilde\sigma)=\gamma (u_t-F).
\end{equation}
It is important to stress that, unlike \eqref{cl0}, the above equation holds
on the entire jet space rather than merely modulo $\check{\mathcal E}$.

The characteristic $\gamma$ of any conservation law satisfies the equation (see e.g.\ \cite{Olver1993})
\begin{equation}\label{cosym}
D_t\gamma+F_*^\dagger\gamma=0\bmod\check{\mathcal E},
\qquad\mbox{or equivalently,}\qquad
\gamma_t+\gamma_*F+F_*^\dagger\gamma=0.
\end{equation}
However, in general a solution of~\eqref{cosym} is not necessarily a characteristic of some conservation
law for~\eqref{EqGenEvol}. Solutions of~\eqref{cosym} are called {\em cosymmetries}, see e.g. \cite{Blaszak}.

It can be shown that the characteristic of the conservation law
associated with a conserved vector of the form $(\rho,\sigma)$
equals $\delta\rho/\delta u$.
This yields a necessary and sufficient condition for
a cosymmetry $\gamma$ to be a characteristic of a conservation law (see e.g.\ \cite{Olver1993}): 
$\gamma_*=\gamma_*^\dagger$.
This condition means simply that the Fr\'echet derivative of $\gamma$ is formally self-adjoint.

The following results are well known, see e.g.\ \cite{fol1} for Lemma~\ref{LemmaOnFormOfConsVecsOfEvolEqs}.

\begin{lemma}\label{LemmaOnFormOfConsVecsOfEvolEqs2}
Suppose that an equation from the class~\eqref{EqGenEvol} admits
a nontrivial conserved vector $(\rho,\sigma)$, where $\ord \rho=k>0$, and $\rho_{u_k u_k}=0$. Then the conserved vector
$(\rho,\sigma)$ is equivalent to a conserved
vector~$(\tilde\rho,\tilde\sigma)$ with $\ord\tilde{\rho}\leqslant k-1$.
\end{lemma}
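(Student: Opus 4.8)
The plan is to exploit the hypothesis $\rho_{u_ku_k}=0$, which says that $\rho$ is at most linear in the highest derivative $u_k$, so we may write $\rho=A(t,x,u_0,\dots,u_{k-1})u_k+B(t,x,u_0,\dots,u_{k-1})$. The idea is to peel off the $u_k$-dependence by subtracting a total $x$-derivative, i.e.\ to find a differential function $\zeta$ of order $\le k-1$ such that $\rho-D_x\zeta$ has order $\le k-1$; then $(\tilde\rho,\tilde\sigma):=(\rho-D_x\zeta,\sigma+D_t\zeta)$ is the desired equivalent conserved vector, since adding $(-D_x\zeta, D_t\zeta)$ is adding a trivial conserved vector.

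First I would look for $\zeta=\zeta(t,x,u_0,\dots,u_{k-1})$ and compute $D_x\zeta=\sum_{j=0}^{k-1}\zeta_{u_j}u_{j+1}+\zeta_x$. The only term here involving $u_k$ is $\zeta_{u_{k-1}}u_k$. Hence $\rho-D_x\zeta$ is free of $u_k$ precisely when $\zeta_{u_{k-1}}=A(t,x,u_0,\dots,u_{k-1})$. This is a single equation for $\zeta$ in which the right-hand side is a known function, so one integrates $A$ once in $u_{k-1}$: set $\zeta=\int A\,du_{k-1}$ (with the remaining integration ``constant'' an arbitrary function of $t,x,u_0,\dots,u_{k-2}$, which we may simply take to be zero). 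This $\zeta$ has order $\le k-1$, so $D_x\zeta$ has order $\le k$ with the $u_k$-coefficient exactly $A$; therefore $\tilde\rho=\rho-D_x\zeta$ depends on $u_0,\dots,u_{k-1}$ only, i.e.\ $\ord\tilde\rho\le k-1$. Finally, $\tilde\sigma=\sigma+D_t\zeta$ is again a differential function, and $D_t\tilde\rho+D_x\tilde\sigma=D_t\rho+D_x\sigma\equiv0\bmod\check{\mathcal E}$, so $(\tilde\rho,\tilde\sigma)$ is a conserved vector equivalent to $(\rho,\sigma)$.

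There is essentially no obstacle here: the construction is just a one-step integration, and the hypothesis $\rho_{u_ku_k}=0$ is exactly what guarantees that the $u_k$-dependence is linear and hence removable in a single pass. The only point that needs a word of care is that $\zeta$ must not secretly reintroduce order $k$ or higher: since $A$ depends only on $t,x,u_0,\dots,u_{k-1}$, its antiderivative in $u_{k-1}$ also depends only on those variables, so $\ord\zeta\le k-1$ and $\ord D_x\zeta\le k$ as claimed. One should also remark that the nontriviality assumption on $(\rho,\sigma)$ is not actually needed for the reduction step itself; it is merely the setting in which the statement is of interest (a trivial conserved vector is already equivalent to $(0,0)$). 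Thus the lemma follows immediately.
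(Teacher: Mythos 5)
Your proof is correct and follows essentially the same route as the paper: both write $\rho$ as linear in $u_k$, set $\Phi=\int\rho_{u_k}\,du_{k-1}$ (your $\zeta$), and pass to the equivalent conserved vector $(\rho-D_x\Phi,\ \sigma+D_t\Phi)$, whose density has order at most $k-1$. Your added remarks (that $\zeta$ cannot reintroduce order $k$, and that nontriviality is not needed for the reduction step) are accurate but do not change the argument.
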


\begin{proof}
By assumption, $\rho=\rho^1 u_k+\rho^0$, and hence $\sigma=-\rho^1
D_x^{k-1} (F)+\sigma^0$, where
$\rho^1=\rho^1(t,x,u,u_1,\dots,u_{k-1})$,
$\rho^0=\rho^0(t,x,u,u_1,\dots,u_{k-1})$, and
$\sigma^0=\sigma^0(t,x,u,u_1,\dots,u_{k+n})$. Put $\tilde
\rho=\rho-D_x\Phi$ and $\tilde\sigma=\sigma+D_t\Phi$, where
$\Phi=\int\rho^1 du_{k-1}$. Then $\tilde\rho_{u_k}=0$,
$\tilde\sigma_{u_{k+n}}=0$ and $(\tilde\rho,\tilde\sigma)$ is a
conserved vector equivalent to $(\rho,\sigma)$, and $\ord \tilde{\rho}\leqslant k-1$.
\end{proof}

In what follows, for any given conservation law $\mathcal L$ we shall, unless otherwise explicitly stated,
choose a representative (that is, the conserved vector) with the lowest possible order $k$ of the associated density $\rho$.
The order in question (i.e., the greatest integer $k$ such that $\rho_{u_ku_k}\neq 0$ but $\rho_{u_j}=0$ for all $j>k$)
will be called the {\em density order} of $\mathcal L$ and denoted by $\ord_{\rm d}\mathcal L$.
It equals one half of the order of the associated characteristic. 

\begin{lemma}\label{LemmaOnFormOfConsVecsOfEvolEqs}
For any conservation law $\mathcal L$ 
of a $(1+1)$-dimensional even-order ($n=2q$) evolution equation 
of the form \eqref{EqGenEvol} we have $\ord_{\rm d}\mathcal L\leqslant q$. 
\end{lemma}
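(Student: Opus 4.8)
The plan is to argue by contradiction, following the classical order-reduction strategy for conserved vectors of evolution equations. Suppose $\mathcal L$ has a density $\rho$ of order $k=\ord_{\rm d}\mathcal L\ge q+1$ chosen to be minimal in its equivalence class. By Lemma~\ref{LemmaOnFormOfConsVecsOfEvolEqs2} we may assume $\rho_{u_ku_k}\ne0$, since otherwise the order could be lowered. The characteristic $\gamma=\delta\rho/\delta u$ then has order exactly $2k$ (this is the ``one half'' remark preceding the lemma, which I would either invoke or quickly re-derive from $\delta/\delta u=\sum(-D_x)^i\p_{u_i}$ applied to $\rho$, noting that the top term $(-D_x)^k(\rho_{u_k})$ contributes $(-1)^k\rho_{u_ku_k}u_{2k}+\cdots$ so $\ord\gamma=2k$ and $\gamma_{u_{2k}}=(-1)^k\rho_{u_ku_k}\ne0$).

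Next I would feed this $\gamma$ into the cosymmetry equation~\eqref{cosym}, written in the ``entire jet space'' form $\gamma_t+\gamma_*F+F_*^\dagger\gamma=0$, and compare orders. The term $\gamma_*F=\sum_i\gamma_{u_i}D_x^iF$ has order $2k+n$, with leading coefficient $\gamma_{u_{2k}}F_{u_n}$ at $u_{2k+n}$; similarly $F_*^\dagger\gamma=\sum_i(-D_x)^i(F_{u_i}\gamma)$ has order $2k+n$, with leading coefficient $(-1)^nF_{u_n}\gamma_{u_{2k}}$ at $u_{2k+n}$. Since $n=2q$ is even, $(-1)^n=1$, so these two top-order contributions \emph{add} rather than cancel, giving a nonzero coefficient $2F_{u_n}\gamma_{u_{2k}}$ at $u_{2k+n}$, which cannot be balanced by $\gamma_t$ (order $\le 2k$) — a contradiction. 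Here one must be slightly careful: $F_{u_n}$ depends on at most $u_n$, and $\gamma_{u_{2k}}$ depends on at most $u_{2k}$, so the product is genuinely nonzero as a differential function and the coefficient of $u_{2k+n}$ in the whole expression is exactly $2F_{u_n}\gamma_{u_{2k}}\ne0$. This forces $k\le q$.

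The main obstacle is making the leading-order bookkeeping airtight, in particular confirming that no cancellation occurs between $\gamma_*F$ and $F_*^\dagger\gamma$ at the top order and that lower-order cross terms (from the binomial expansions of $D_x^i$ and $(-D_x)^i$ acting on products) cannot conspire to kill the $u_{2k+n}$ coefficient. This is the place where the evenness of $n$ is used essentially — in the odd-order case the sign would be $(-1)^n=-1$ and the two leading terms would cancel, which is precisely why odd-order equations (like KdV) escape this bound and can carry conservation laws of arbitrarily high order. I would also remark that the identity $\ord\gamma=2\ord_{\rm d}\mathcal L$ deserves a one-line justification (the self-adjointness condition $\gamma_*=\gamma_*^\dagger$ forces $\gamma_*$ to have even order $2k$, hence $\ord\gamma=2k$), after which everything reduces to the order comparison above. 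The argument is short; the only real care needed is in the symbol-level computation of the two terms of~\eqref{cosym}.
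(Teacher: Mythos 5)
Your proof is correct, but note that the paper offers no proof of this lemma to compare against: it is stated as a well-known result with a pointer to the literature (``see e.g.\ \cite{fol1}''), and the introduction attributes it to \cite{ag1,fol1,Ibragimov1985,ig}. Your argument is essentially the standard one from those references, and every ingredient you invoke is available in the paper: the minimal-order representative with $\rho_{u_ku_k}\neq0$ (Lemma~\ref{LemmaOnFormOfConsVecsOfEvolEqs2} plus the definition of density order), the identity $\gamma=\delta\rho/\delta u$ for the characteristic, and the cosymmetry equation~\eqref{cosym} in the form $\gamma_t+\gamma_*F+F_*^\dagger\gamma=0$. The bookkeeping you flag as the delicate point does go through: since $k\geqslant q+1$ gives $2k>n$, the only terms reaching order $2k+n$ are $\gamma_{u_{2k}}D_x^{2k}F$ in $\gamma_*F$ and $(-D_x)^n(F_{u_n}\gamma)$ in $F_*^\dagger\gamma$, and because $(F_{u_n})_{u_{2k}}=0$ the left-hand side is affine in $u_{2k+n}$ with coefficient exactly $\bigl(1+(-1)^n\bigr)F_{u_n}\gamma_{u_{2k}}$; as the identity holds on the whole jet space and $F_{u_n}\neq0$, evenness of $n$ forces $\gamma_{u_{2k}}=(-1)^k\rho_{u_ku_k}=0$, the desired contradiction (and the cancellation for odd $n$ correctly explains why no such bound holds there). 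One cosmetic remark: your closing aside that formal self-adjointness of $\gamma_*$ alone gives $\ord\gamma=2k$ proves only that the order is even; it is your direct computation of the coefficient $(-1)^k\rho_{u_ku_k}$ at $u_{2k}$ that actually establishes $\ord\gamma=2k$, so keep that computation and drop (or soften) the aside.
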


\section{Evolution equations having low-order conservation laws}%
\label{SectionOnEvolutionEquationsHavingLow-OrderConservationLaws}

Contact and point equivalence transformations can be used
for bringing equations from the class~\eqref{EqGenEvol} that admit (at least) 
one or two nontrivial low-order conservation laws 
into certain special forms. This is achieved through bringing the conservation laws in question to normal forms.

\begin{theorem}\label{TheoremOnOneConsLawOfEvolEqs}
Any pair $(\mathcal E,\mathcal L)$, where $\mathcal E$ is an equation of the form~\eqref{EqGenEvol}
and $\mathcal L$ is a nontrivial conservation law of $\mathcal E$ with $\ord_{\rm d}\mathcal L\leqslant 1$
is $G^\sim_{\rm c}$-equivalent to a pair $(\tilde{\mathcal E},\tilde{\mathcal L})$,
where $\tilde{\mathcal E}$ is an equation of the same form
and $\tilde{\mathcal L}$ is a conservation law of $\tilde{\mathcal E}$ with the characteristic equal to $1$.
\end{theorem}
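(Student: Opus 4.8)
The plan is to normalize the conservation law $\mathcal L$ in two stages: first bring its characteristic to a constant, then to $1$. By hypothesis $\ord_{\rm d}\mathcal L\leqslant 1$, so the characteristic $\gamma=\delta\rho/\delta u$ of $\mathcal L$ has order $0$, i.e.\ $\gamma=\gamma(t,x,u)$; since $\mathcal L$ is nontrivial, $\gamma\ne0$. The key computational input is the transformation rule for characteristics (equivalently, for densities via $\gamma=\delta\rho/\delta u$) under the point equivalence group $G^\sim_{\rm p}\subset G^\sim_{\rm c}$ whose elements are given by \eqref{EqPointTransOfGenEvolEqs}. Using \eqref{clch}, which holds on the whole jet space, one checks that under a point transformation $\tilde t=T(t)$, $\tilde x=X(t,x,u)$, $\tilde u=U(t,x,u)$ the characteristic transforms as $\tilde\gamma=(T_tD_xX/\Delta)\,\gamma$ up to adding something in $\mathop{\rm Im}(\text{total divergence operator acting trivially})$; more precisely, comparing $\gamma(u_t-F)$ with its pullback and using that $T_t\,\tilde u_{\tilde t}-\tilde u_{\tilde x}\,X_t\cdots$ is proportional to $u_t-F$, one gets $\tilde\gamma = \dfrac{T_t}{\Delta}\,\gamma$ (as a differential function of the tilded variables, re-expressed through the old ones). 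So the question reduces to: can we choose $T$, $X$, $U$ with $T_t\ne0$, $\Delta=X_xU_u-X_uU_x\ne0$, so that $T_t/\Delta=\gamma$, i.e.\ $\Delta=T_t/\gamma$?

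First I would handle the case $\gamma_u\ne0$. Then I would look for a point transformation of the special form $\tilde t=t$, $\tilde x=x$, $\tilde u=U(t,x,u)$, which is always admissible provided $U_u\ne0$; here $\Delta=U_u$, so I need $U_u=1/\gamma(t,x,u)$, which is solved by $U=\int du/\gamma(t,x,u)$ (an antiderivative in $u$ with $t,x$ as parameters), well-defined and with $U_u=1/\gamma\ne0$ since $\gamma\ne0$. After this transformation the new characteristic is identically $1$, as desired. If instead $\gamma_u=0$ but $\gamma$ genuinely depends on $x$ (or only on $t$), a pure change of $u$ cannot rescale by a factor independent of $u$ while keeping $\Delta$ matched unless we also move $x$; in that subcase I would take $\tilde t=t$, $\tilde u=u$, and $\tilde x=X(t,x)$ with $X_x=1/\gamma(t,x)$, again solvable since $\gamma\ne0$, giving $\Delta=X_x U_u=1/\gamma$ and hence $\tilde\gamma=1$. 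Finally, the residual case $\gamma=\gamma(t)$ alone is dispatched by $\tilde x=x$, $\tilde u=u$, and a reparametrization $\tilde t=T(t)$ with $T_t=\gamma(t)$, so that $T_t/\Delta=\gamma/1=\gamma$... wait, here $\Delta=1$ and we want $T_t/\Delta=\gamma$, i.e.\ $T_t=\gamma(t)$, again solvable; then $\tilde\gamma=1$. In every case the constructed map lies in $G^\sim_{\rm p}\subset G^\sim_{\rm c}$, satisfies the nondegeneracy conditions, and by the normalization statement of the Proposition its prolongation to $F$ lands us at an equation $\tilde{\mathcal E}$ of the form \eqref{EqGenEvol}; the image conservation law $\tilde{\mathcal L}$ has characteristic $1$.

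The main obstacle I anticipate is establishing cleanly and rigorously the transformation law $\tilde\gamma=(T_t/\Delta)\gamma$ for the characteristic under \eqref{EqPointTransOfGenEvolEqs}, together with the fact that this characteristic is an invariant (up to the stated scalar factor and the choice of representative density, which by Lemma~\ref{LemmaOnFormOfConsVecsOfEvolEqs2} and the subsequent discussion is essentially unique at the level of $\delta\rho/\delta u$). Concretely, one must verify that pulling back the defining identity \eqref{clch} by a point transformation produces an identity of the same shape with $\gamma$ replaced by $(T_t/\Delta)\gamma$ plus a correction that is again a total divergence, so that the equivalence class of $(\tilde\rho,\tilde\sigma)$ has precisely this new characteristic. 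Once that rule is in hand, the construction of $T$, $X$, $U$ reduces, as sketched, to solving first-order ODEs with nowhere-vanishing right-hand sides, which is always possible locally. A minor point to address is unifying the case split above: in fact a single choice — keep $\tilde t=t$, set $\tilde x=x$ when $\gamma_u\ne0$ and rescale $u$, or set $\tilde u=u$ and rescale $x$ otherwise — suffices, and one may even give one formula $\tilde t=t$, $\tilde x=x$, $\tilde u=\int du/\gamma$ whenever $\gamma_u\ne0$, treating the degenerate strata separately; since nontriviality forces $\gamma\not\equiv0$, at least one of these always applies.
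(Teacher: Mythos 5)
Your argument breaks down at the very first step: from $\ord_{\rm d}\mathcal L\leqslant 1$ you conclude that the characteristic $\gamma=\delta\rho/\delta u$ has order $0$, i.e.\ $\gamma=\gamma(t,x,u)$. That is false precisely in the case that makes the theorem nontrivial. The density order is defined via $\rho_{u_ku_k}\neq0$, so $\ord_{\rm d}\mathcal L=1$ means $\rho=\rho(t,x,u,u_x)$ with $\rho_{u_xu_x}\neq0$, and then $\gamma=\rho_u-D_x\rho_{u_x}$ contains the term $-\rho_{u_xu_x}u_{xx}$, i.e.\ $\ord\gamma=2$ (the paper states explicitly that the density order equals one half of the order of the characteristic). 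A point transformation \eqref{EqPointTransOfGenEvolEqs} only rescales the characteristic by a factor depending on $(t,x,u)$, so it can never map such a $\gamma$ to $1$; this is exactly why the theorem is stated for the contact group $G^\sim_{\rm c}$, while the point-transformation statement is only Corollary~\ref{CorollaryOnOne0OrderConsLawOfEvolEqs} for density order $0$. The heart of the paper's proof is the case $\rho_{u_xu_x}\neq0$, where one must construct a genuine contact transformation with $X_{u_x}\neq0$ by solving the system \eqref{EqCLsOfEvolEqsSystemForReductionOf1CL2}: the compatibility condition of the resulting overdetermined system for $\Phi$ is a first-order PDE for $X$ admitting a local solution with $X^0_{u_x}\neq0$, and nondegeneracy is checked separately using the nontriviality of $(\rho,\sigma)$. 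None of this appears in your proposal, so what you have is at best an argument for the density-order-$0$ case.

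Even in that restricted case your transformation rule is off: under \eqref{EqPointTransOfGenEvolEqs} one has $\tilde u_{\tilde t}-\tilde F=\bigl(\Delta/(T_tD_xX)\bigr)(u_t-F)$ and $D_{\tilde t}\tilde\rho+D_{\tilde x}\tilde\sigma=\bigl(1/(T_tD_xX)\bigr)(D_t\rho+D_x\sigma)$, so the characteristic transforms as $\tilde\gamma=\gamma/\Delta$, without the extra factor $T_t$ you introduce. Consequently, to achieve $\tilde\gamma=1$ when $\gamma_u\neq0$ you need $U_u=\gamma$, i.e.\ essentially $U=\rho$ (which is exactly the paper's choice $(X,U,\Phi)=(x,\rho,0)$ after reducing to $\rho_{u_x}=0$, $\rho_u\neq0$), not $U_u=1/\gamma$ as you propose; your choice would produce $\tilde\gamma=\gamma^2$. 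This factor error is fixable, but the missing contact construction for $\rho_{u_xu_x}\neq0$ is a genuine gap, not a technicality.
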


\begin{proof}
Let $\mathcal T\in G^\sim_{\rm c}$ map an equation $\mathcal E$ into (another) equation~$\tilde{\mathcal E}$
from the same class~\eqref{EqGenEvol}, see Section~\ref{SectionOnAdmissibleContactTransformationsOfEvolutionEquations}.
Quite obviously, the inverse $\mathcal T^{-1}$ of $\mathcal T$
induces (through pullback) a mapping from the space $\CL(\mathcal E)$ of conservation laws of $\mathcal E$
to $\CL(\tilde{\mathcal E})$.
The conserved vectors of~$\mathcal E$ are transformed into those of~$\tilde{\mathcal E}$ according to
the formula
\cite{Popovych&Ivanova2004CLsOfNDCEs,Popovych&Kunzinger&Ivanova2008}
\[
\tilde \rho=\frac \rho{D_xX}, \quad \tilde \sigma=\frac \sigma{T_t}+\frac {D_tX}{D_xX}\frac \rho{T_t}.
\]
Now let an equation~$\mathcal E$ from the class~\eqref{EqGenEvol} have a nontrivial conservation law~$\mathcal L$
with $\ord_{\rm d}\mathcal L\leqslant1$.
Fix a conserved vector $(\rho,\sigma)$ associated with~$\mathcal L$,
and set $T=t$.
The density~$\tilde \rho$ of the transformed conserved vector $(\tilde \rho,\tilde \sigma)$
is easily seen to depend at most on~$\tilde t$, $\tilde x$, $\tilde u$, $\tilde u_{\tilde x}$ and~$\tilde u_{\tilde x\tilde x}$.
Moreover, it is immediate that $\tilde\rho$ is linear in $\tilde u_{\tilde x\tilde x}$,
so we can pass to an equivalent conserved vector $(\bar\rho,\bar\sigma)$
such that $\p\bar\rho/\p\tilde u_{\tilde x\tilde x}=0$, and hence for the transformed counterpart $\tilde{\mathcal L}$ of $\mathcal L$
we have $\ord_{\mathrm{d}} \tilde{\mathcal L}\leqslant1$.

Next, the conservation law $\tilde{\mathcal L}$
associated with $(\tilde \rho,\tilde \sigma)$ has characteristic $1$ if and only if
there exists a function $\tilde\Phi=\tilde\Phi(\tilde t,\tilde x,\tilde u,\tilde u_{\tilde x})$ such that
$\tilde\rho=\tilde u+D_{\tilde x}\tilde\Phi$. Upon going back to the old coordinates $x, t, u, u_x$
and bearing in mind that $\tilde u=U(t,x,u,u_x)$
and $\tilde x=X(t,x,u,u_x)$ this boils down to $D_x\Phi+UD_xX=\rho$,
where  $\Phi(t,x,u,u_x)=\tilde\Phi(\tilde t,\tilde x,\tilde u,\tilde u_{\tilde x})$.
Splitting the equation $D_x\Phi+UD_xX=\rho$
with respect to~$u_{xx}$ yields the system
\begin{equation}\label{EqCLsOfEvolEqsSystemForReductionOf1CL}
\Phi_x+UX_x+(\Phi_u+UX_u)u_x=\rho, \quad \Phi_{u_x}+U X_{u_x}=0.
\end{equation}
This system
in conjunction with the contact condition~\eqref{EqContactConditionForContactTransOfGenEvolEqs}
has, {\em inter alia}, the following differential consequence:
\[
\Phi_u+UX_u=\rho_{u_x}.
\]
It is obtained as follows. We subtract the result of action of the operator~$\p_x+u_x\p_u$
on the second equation of~\eqref{EqCLsOfEvolEqsSystemForReductionOf1CL}
from the partial $u_x$-derivative of the first equation of~\eqref{EqCLsOfEvolEqsSystemForReductionOf1CL}
while taking into account the contact condition~\eqref{EqContactConditionForContactTransOfGenEvolEqs}.
Moreover, the system~\eqref{EqCLsOfEvolEqsSystemForReductionOf1CL} also implies
the equation $\Phi_x+UX_x=\rho-u_x\rho_{u_x}$.
Thus, we arrive at the system
\begin{equation}\label{EqCLsOfEvolEqsSystemForReductionOf1CL2}
\Phi_x+UX_x=\rho-u_x\rho_{u_x}, \quad \Phi_u+UX_u=\rho_{u_x}, \quad \Phi_{u_x}+UX_{u_x}=0.
\end{equation}
Reversing these steps shows that the system~\eqref{EqCLsOfEvolEqsSystemForReductionOf1CL2}
implies~\eqref{EqContactConditionForContactTransOfGenEvolEqs} and~\eqref{EqCLsOfEvolEqsSystemForReductionOf1CL}.
Hence the combined system of~\eqref{EqContactConditionForContactTransOfGenEvolEqs} and~\eqref{EqCLsOfEvolEqsSystemForReductionOf1CL}
is equivalent to \eqref{EqCLsOfEvolEqsSystemForReductionOf1CL2}.

To complete the proof, it suffices to check that for any function $\rho=\rho(t,x,u,u_x)$ with $(\rho_u,\rho_{u_x})\ne(0,0)$
the system~\eqref{EqCLsOfEvolEqsSystemForReductionOf1CL2} has a solution $(X,U,\Phi)$ which satisfies
the nondegeneracy condition~\eqref{EqNondegeneracyAssumptionForContactTransOfGenEvolEqs}.

Consider first the case $\rho_{u_xu_x}\ne0$ and seek for solutions with $X_{u_x}\ne0$.
The equation $ \quad \Phi_{u_x}+UX_{u_x}=0$ implies that $\Phi_{u_x}\ne0$ and $U=-\Phi_{u_x}/X_{u_x}$.
Then the remaining equations in \eqref{EqCLsOfEvolEqsSystemForReductionOf1CL2} take the form
\begin{equation}\label{EqCLsOfEvolEqsSystemForReductionOf1CL3}
\Phi_x-\frac{X_x}{X_{u_x}}\Phi_{u_x}=\rho-u_x\rho_{u_x}, \quad \Phi_u-\frac{X_u}{X_{u_x}}\Phi_{u_x}=\rho_{u_x}.
\end{equation}
Eq.~\eqref{EqCLsOfEvolEqsSystemForReductionOf1CL3} can be considered
as an overdetermined system with respect to~$\Phi$. The compatibility condition for this system is
\[
\rho_{u_xu_x}X_x+u_x\rho_{u_xu_x}X_u+(\rho_u-u_x\rho_{uu_x}-\rho_{xu_x})X_{u_x}=0;
\]
it should be treated as an equation for~$X$.
As $\rho_{u_xu_x}\ne0$ by assumption, the equation in question has a local solution $X^0$ with $X^0_{u_x}\ne0$.
Substituting $X^0$ into~\eqref{EqCLsOfEvolEqsSystemForReductionOf1CL3} yields a compatible partial differential
system for~$\Phi$.
Take a local solution $\Phi^0$ of this system and set $U^0=-\Phi^0_{u_x}/X^0_{u_x}$.
The chosen triple $(X^0,U^0,\Phi^0)$ satisfies \eqref{EqCLsOfEvolEqsSystemForReductionOf1CL2}.

{\samepage
The nondegeneracy condition~\eqref{EqNondegeneracyAssumptionForContactTransOfGenEvolEqs} is also satisfied.
Indeed, if we assume the converse, then $U=\Psi(t,X)$ for some function~$\Psi$ of
two arguments, and \eqref{EqCLsOfEvolEqsSystemForReductionOf1CL} implies the equality
\[
\rho=\Phi_x+\Psi X_x+(\Phi_u+\Psi X_u)u_x+(\Phi_{u_x}+\Psi X_{u_x})u_{xx}=D_x(\Phi+\textstyle\int\!\Psi\,dX),
\]
i.e., $(\rho,\sigma)$ is a trivial conserved vector, which contradicts the initial assumption
on $(\rho,\sigma)$.

}

Now turn to the case when $\rho_{u_xu_x}=0$. Then
up to the equivalence of conserved vectors we can assume that $\rho_{u_x}=0$ and
$\rho_u\ne0$, where the latter condition ensures nontriviality of the associated conserved vector.
The triple $(X,U,\Phi)=(x,\rho,0)$ obviously satisfies~\eqref{EqCLsOfEvolEqsSystemForReductionOf1CL2}
and~\eqref{EqNondegeneracyAssumptionForContactTransOfGenEvolEqs}, and the result follows.
\end{proof}

\begin{corollary}\label{CorollaryOnOne0OrderConsLawOfEvolEqs}
Any pair $(\mathcal E,\mathcal L)$, where $\mathcal E$ is an equation of the form~\eqref{EqGenEvol}
and $\mathcal L$ is a nontrivial conservation law of $\mathcal E$ with the density order 0
is $G^\sim_{\rm p}$-equivalent to a pair $(\tilde{\mathcal E},\tilde{\mathcal L})$,
where $\tilde{\mathcal E}$ also is an equation of form~\eqref{EqGenEvol}
and $\tilde{\mathcal L}$ is a conservation law of $\tilde{\mathcal E}$ with the characteristic equal to $1$.
\end{corollary}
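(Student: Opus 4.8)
The plan is to reuse the last (and easiest) case of the proof of Theorem~\ref{TheoremOnOneConsLawOfEvolEqs}, observing that for a conservation law of density order~$0$ the whole argument collapses to that case and that, moreover, the transformation doing the job is already a point transformation. So the corollary will follow by specialization, with only a short extra observation needed at the outset.

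First I would fix a representative conserved vector $(\rho,\sigma)$ of $\mathcal L$ whose density has the least possible order, so that $\rho=\rho(t,x,u)$. Since $\mathcal L$ is nontrivial, none of its representatives has $\rho\in\mathop{\rm Im} D_x$; but every function of $t$ and $x$ alone lies in $\mathop{\rm Im} D_x$ (take $\zeta=\int\rho\,dx$, so that $\rho=D_x\zeta$), hence $\rho_u\ne0$.

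Next I would repeat the relevant computation from the proof of Theorem~\ref{TheoremOnOneConsLawOfEvolEqs}, restricted to point transformations. A transformation $\tilde t=t$, $\tilde x=X(t,x,u)$, $\tilde u=U(t,x,u)$ from $G^\sim_{\rm p}$ sends $(\rho,\sigma)$ to a conserved vector with density $\tilde\rho=\rho/D_xX$, and the resulting conservation law has characteristic~$1$ exactly when $\tilde\rho=\tilde u+D_{\tilde x}\tilde\Phi$ for some $\tilde\Phi=\tilde\Phi(\tilde t,\tilde x,\tilde u)$. Multiplying by $D_xX$ and returning to the old variables turns this into $D_x\Phi+U\,D_xX=\rho$, where $\Phi(t,x,u)=\tilde\Phi(\tilde t,\tilde x,\tilde u)$; splitting with respect to $u_x$ (since $\Phi$, $U$, $X$ and $\rho$ are all $u_x$-free) gives the reduced system
\[
\Phi_x+UX_x=\rho,\qquad \Phi_u+UX_u=0,
\]
the point, density-order-$0$ analogue of~\eqref{EqCLsOfEvolEqsSystemForReductionOf1CL2}; no contact condition enters, as point transformations automatically satisfy~\eqref{EqContactConditionForContactTransOfGenEvolEqs}.

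Finally I would simply exhibit the solution $(X,U,\Phi)=(x,\rho,0)$: it clearly satisfies the reduced system, and the nondegeneracy requirements for $G^\sim_{\rm p}$ hold because $T_t=1\ne0$ and $\Delta=X_xU_u-X_uU_x=\rho_u\ne0$. Since the class~\eqref{EqGenEvol} is point-normalized, the image equation $\tilde{\mathcal E}$ again has the form~\eqref{EqGenEvol}; and by construction $\tilde\rho=\rho=\tilde u$ in the new variables, so the characteristic of $\tilde{\mathcal L}$ equals $\delta\tilde u/\delta\tilde u=1$. I do not anticipate any real obstacle here: the only step requiring even minor care is deducing $\rho_u\ne0$ from nontriviality together with density order~$0$, and everything else is a direct specialization of the argument already carried out for Theorem~\ref{TheoremOnOneConsLawOfEvolEqs}.
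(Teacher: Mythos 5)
Your proposal is correct and follows essentially the same route as the paper: the corollary is obtained there precisely by specializing the last case of the proof of Theorem~\ref{TheoremOnOneConsLawOfEvolEqs} (the case $\rho_{u_x}=0$, $\rho_u\ne0$), where the triple $(X,U,\Phi)=(x,\rho,0)$ already defines a point transformation satisfying the nondegeneracy condition $\Delta=\rho_u\ne0$. Your extra observation that nontriviality forces $\rho_u\ne0$ matches the paper's remark that this condition ensures nontriviality of the conserved vector.
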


\begin{corollary}\label{CorollaryOnFormOfEvolEqsWithOne1stOrderConsLaw}
An equation~$\mathcal E$ from class~\eqref{EqGenEvol} admits a nontrivial conservation law~$\mathcal L$
with $\ord_{\rm d}\mathcal L\leqslant1$ (resp.\ $\ord_{\rm d}\mathcal L=0$)
if and only if it can be locally reduced by a contact (resp.\ point) transformation
to the form
\begin{equation}\label{EqCanonicalFormOfEvolEqsWithOne1stOrderConsLaw}
\tilde{u}_{\tilde t}=D_{\tilde x}G(\tilde t,\tilde x,\tilde{u}_0,\dots,\tilde{u}_{n-1}),\quad
G_{\tilde{u}_{n-1}}\ne0.
\end{equation}
\end{corollary}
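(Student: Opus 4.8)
The plan is to deduce Corollary~\ref{CorollaryOnFormOfEvolEqsWithOne1stOrderConsLaw} directly from Theorem~\ref{TheoremOnOneConsLawOfEvolEqs} (and Corollary~\ref{CorollaryOnOne0OrderConsLawOfEvolEqs} for the point case), by observing that an equation possessing a conservation law with characteristic~$1$ is precisely one that can be written in the conserved form~\eqref{EqCanonicalFormOfEvolEqsWithOne1stOrderConsLaw}. First I would prove the ``only if'' direction. Given $(\mathcal E,\mathcal L)$ with $\ord_{\rm d}\mathcal L\leqslant 1$, Theorem~\ref{TheoremOnOneConsLawOfEvolEqs} furnishes a contact transformation (the point case uses Corollary~\ref{CorollaryOnOne0OrderConsLawOfEvolEqs}) mapping it to $(\tilde{\mathcal E},\tilde{\mathcal L})$ with characteristic $\gamma=1$. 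By the characteristic identity~\eqref{clch}, modulo a trivial conserved vector we have $D_{\tilde t}\tilde\rho+D_{\tilde x}\tilde\sigma=\tilde u_{\tilde t}-\tilde F$, so on solutions $D_{\tilde t}\tilde\rho=\tilde u_{\tilde t}\bmod\check{\tilde{\mathcal E}}$; more to the point, since $\gamma=\delta\tilde\rho/\delta\tilde u=1$, we get $\tilde\rho=\tilde u+D_{\tilde x}\tilde\Phi$ for some differential function $\tilde\Phi$, i.e.\ we may take $\tilde\rho=\tilde u$ as a representative. Then~\eqref{cl0} reads $D_{\tilde t}\tilde u+D_{\tilde x}\tilde\sigma=0\bmod\check{\tilde{\mathcal E}}$, which means $\tilde F=-D_{\tilde x}\tilde\sigma$ on the equation, hence identically after renaming $G:=-\tilde\sigma$. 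One must then check that $G$ has the claimed order: since $\ord_{\rm d}\tilde{\mathcal L}\leqslant 1$, the reduction inside the proof of Theorem~\ref{TheoremOnOneConsLawOfEvolEqs} gives a flux of order at most $n-1$ in $\tilde x$-derivatives (the density $\tilde u$ has order $0$, so $\tilde\sigma=\tilde u+\cdots$ contributes, via~\eqref{cl0a}, a term $D_{\tilde x}^{\,-1}$ of $\tilde F$, which is of order $n-1$), and $G_{\tilde u_{n-1}}\neq 0$ follows from $F_{u_n}\neq0$ together with the fact that a contact transformation preserves the order $n$ of the equation.

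Conversely, if $\mathcal E$ is locally contact-equivalent to an equation of the form~\eqref{EqCanonicalFormOfEvolEqsWithOne1stOrderConsLaw}, then that latter equation manifestly has the conserved vector $(\tilde\rho,\tilde\sigma)=(\tilde u,-G)$: indeed $D_{\tilde t}\tilde u+D_{\tilde x}(-G)=\tilde u_{\tilde t}-D_{\tilde x}G=0$ on solutions. Its characteristic is $\delta\tilde u/\delta\tilde u=1$, which is a nonzero differential function, so the associated conservation law is nontrivial, and its density order is clearly $\leqslant 1$ (in fact $0$); pulling this conservation law back along the equivalence transformation (which, as recalled in the proof of Theorem~\ref{TheoremOnOneConsLawOfEvolEqs}, maps $\CL(\tilde{\mathcal E})$ to $\CL(\mathcal E)$ and cannot send a nontrivial conservation law to a trivial one, since the inverse transformation would then send a trivial one to a nontrivial one) yields a nontrivial $\mathcal L\in\CL(\mathcal E)$ with $\ord_{\rm d}\mathcal L\leqslant 1$. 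For the parenthetical point-transformation statement one runs the same argument using $G^\sim_{\rm p}$ and Corollary~\ref{CorollaryOnOne0OrderConsLawOfEvolEqs}, noting that a point transformation acting on a density of order $0$ keeps the order at $0$.

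The main obstacle I anticipate is the bookkeeping on orders: verifying that the flux of the normalized conserved vector really has $\tilde x$-order exactly $n-1$ and that $G_{\tilde u_{n-1}}\neq 0$, i.e.\ that the resulting equation genuinely lies in the class~\eqref{EqGenEvol} with the same $n$. This amounts to tracking how $D_{\tilde x}^{-1}\tilde F$ behaves — concretely, if $\tilde F$ has order $n$ with $\tilde F_{\tilde u_n}\neq 0$, then any $G$ with $D_{\tilde x}G=\tilde F$ satisfies $G_{\tilde u_{n-1}}=\tilde F_{\tilde u_n}\neq 0$ and has order $n-1$ — combined with the standard fact (Section~\ref{SectionOnAdmissibleContactTransformationsOfEvolutionEquations}) that contact and point transformations preserve the order $n$ of equations in the class. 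Everything else is a direct translation between ``characteristic equal to $1$'' and ``written in conserved form'', so the proof should be short once the order accounting is pinned down.
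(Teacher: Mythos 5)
Your proposal is correct and follows essentially the same route as the paper: apply Theorem~\ref{TheoremOnOneConsLawOfEvolEqs} (resp.\ Corollary~\ref{CorollaryOnOne0OrderConsLawOfEvolEqs}) to normalize the characteristic to $1$, identify this with the conserved form $\tilde u_{\tilde t}=D_{\tilde x}G$ via $\tilde\rho=\tilde u$ modulo a null divergence and $G=-\tilde\sigma$, and for the converse pull back the unit-characteristic conservation law of~\eqref{EqCanonicalFormOfEvolEqsWithOne1stOrderConsLaw} along the transformation. Your extra bookkeeping (that $D_{\tilde x}G=\tilde F$ forces $\ord G=n-1$ with $G_{\tilde u_{n-1}}=\tilde F_{\tilde u_n}\ne0$, and that nontriviality and density order are preserved under the pullback) only makes explicit what the paper leaves implicit.
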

Note that upon setting $n=3$ and $\ord_{\rm d}\mathcal L=0$ in this
corollary we recover Theorem~1.1 from~\cite{fol2}.
\begin{proof}
Fix a nontrivial conservation law~$\mathcal L$ of~$\mathcal E$ 
with $\ord_{\rm d}\mathcal L\leqslant1$ (resp.\ $\ord_{\rm d}\mathcal L=0$).
By  Theorem~\ref{TheoremOnOneConsLawOfEvolEqs} (resp.\ Corollary~\ref{CorollaryOnOne0OrderConsLawOfEvolEqs}),
the pair $(\mathcal E,\mathcal L)$ is reduced by a contact (resp.\ point) transformation to
a pair $(\tilde{\mathcal E},\tilde{\mathcal L})$,
where the equation~$\tilde{\mathcal E}$ has the form
$\tilde u_{\tilde t}=\tilde F(\tilde t,\tilde x,\tilde u_0,\dots,\tilde u_n)$
and $\tilde{\mathcal L}$ is its conservation law with the unit characteristic.
Therefore, the equality $D_{\tilde t}\tilde \rho+D_{\tilde x}\tilde \sigma=\tilde u_{\tilde t}-\tilde F$ is satisfied
for a conserved vector $(\tilde \rho,\tilde \sigma)$ from $\tilde{\mathcal L}$, i.e.,
up to a summand being a null divergence we have $\tilde \rho=\tilde u$ and $\tilde F=-D_{\tilde x}\tilde \sigma$.
To complete the proof, it suffices to put $G=-\tilde \sigma$.

Conversely, let the equation~$\mathcal E$ be locally reducible by a
contact (resp.\ point) transformation~$\mathcal T$ to the equation
$\tilde u_{\tilde t}=D_{\tilde x}G(\tilde t,\tilde x,\tilde{u}_0,\dots,\tilde{u}_{n-1})$, where $G_{\tilde{u}_{n-1}}\ne0$.
The transformed equation $\tilde u_{\tilde
t}=D_{\tilde x}G$ admits at least the conservation
law~$\tilde{\mathcal L}$ with the unit characteristic. The
preimage~$\mathcal L$ of~$\tilde{\mathcal L}$ with respect
to~$\mathcal T$ is a nontrivial conservation law of~$\mathcal E$ with
$\ord_{\rm d}\mathcal L\leqslant1$ (resp.\ $\ord_{\rm d}\mathcal
L=0$).
\end{proof}

\begin{corollary}\label{CorollaryOnFormOfEvolEqsWithTwo1thAndLowOrderConsLaws}
If an equation~$\mathcal E$ of the form~\eqref{EqGenEvol} with $n\geqslant4$ (resp.\ $n\geqslant5$)
has two linearly independent conservation laws $\mathcal L^{\rm I}$ and $\mathcal L^{\rm II}$,
where $\ord_{\rm d}\mathcal L^{\rm I}\leqslant1$ and
$\ord_{\rm d}\mathcal L^{\rm II}\leqslant n/2-1$ (resp.\ $\ord_{\rm d}\mathcal L^{\rm II}<n/2-1$)
then it can be locally reduced by a contact transformation to the form~\eqref{EqCanonicalFormOfEvolEqsWithOne1stOrderConsLaw}
where $G$ is linear fractional (resp. linear) with respect to $\tilde u_{n-1}$, i.e.,
\[
G=\frac{G_1\tilde u_{n-1}+G_0}{G_3\tilde u_{n-1}+G_2} \quad(\mbox{resp. } G=G_1\tilde u_{n-1}+G_0),
\]
where $G_0$, \dots, $G_3$ (resp.\ $G_0$ and $G_1$) are differential functions of order less than $n-1$.
If $\ord_{\rm d}\mathcal L^{\rm I}=0$ then the contact transformation in question is a prolongation of a point transformation.\looseness=-1
\end{corollary}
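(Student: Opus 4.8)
The plan is to first reduce, via the one‑conservation‑law normal form, to the situation in which $\mathcal E$ is already written in conserved form with respect to $\mathcal L^{\rm I}$, and then to read off the structure of the right‑hand side from the top‑order part of the cosymmetry equation satisfied by the characteristic of $\mathcal L^{\rm II}$. Concretely, I would first apply Theorem~\ref{TheoremOnOneConsLawOfEvolEqs} (or Corollary~\ref{CorollaryOnOne0OrderConsLawOfEvolEqs} when $\ord_{\rm d}\mathcal L^{\rm I}=0$) to the pair $(\mathcal E,\mathcal L^{\rm I})$ and then Corollary~\ref{CorollaryOnFormOfEvolEqsWithOne1stOrderConsLaw}: after a contact transformation (a point transformation in the case $\ord_{\rm d}\mathcal L^{\rm I}=0$) we may assume $\mathcal E$ has the form $u_t=D_xG$ with $G=G(t,x,u_0,\dots,u_{n-1})$, $G_{u_{n-1}}\ne0$, so that $\ord G=n-1$ and $\mathcal L^{\rm I}$ has characteristic~$1$. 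The same transformation sends $\mathcal L^{\rm II}$ to a conservation law of the transformed equation; since a contact transformation cannot increase the density order of a conservation law (this follows from the transformation rule $\tilde\rho=\rho/D_xX$ for densities together with Lemma~\ref{LemmaOnFormOfConsVecsOfEvolEqs2}), the image of $\mathcal L^{\rm II}$ still has density order $p\le n/2-1$ (resp.\ $p<n/2-1$). Let $\gamma$ denote its characteristic; then $\gamma$ is a cosymmetry with $\ord\gamma=2p$ which is affine in $u_{2p}$, and $D_x\gamma\ne0$ (otherwise $\gamma$ would be a function of $t$ and $x$ alone and then, by the cosymmetry equation, constant, forcing $\mathcal L^{\rm II}$ to be a multiple of $\mathcal L^{\rm I}$).

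Next I would write the cosymmetry equation $\gamma_t+\gamma_*F+F_*^\dagger\gamma=0$ for $F=D_xG$; using $F_*=D_x\circ G_*$, hence $F_*^\dagger=-G_*^\dagger\circ D_x$, this becomes the jet‑space identity
\[
\gamma_t+\sum_{j=0}^{2p}\gamma_{u_j}D_x^{j+1}G-\sum_{i=0}^{n-1}(-D_x)^i\bigl(G_{u_i}D_x\gamma\bigr)=0 .
\]
The decisive step is to compare the coefficients of $u_{2n-2}$. Since $\partial(D_x^k\varphi)/\partial u_{m+k}=\varphi_{u_m}$ for any differential function $\varphi$ with $\ord\varphi\le m$, and since $2p\le n-2$ makes every summand with $i<n-1$ and every summand with $j<n-2$ of order strictly below $2n-2$, the coefficient of $u_{2n-2}$ receives a contribution only from the $i=n-1$ term of the last sum (always) and, under the standing hypothesis $p\le n/2-1$, from the $j=n-2$ term of the middle sum exactly when $p=n/2-1$. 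In the ``resp.'' case $p<n/2-1$ (so that $2p\le n-3$, whence $\ord D_x\gamma\le n-2$ and $(D_x\gamma)_{u_{n-1}}=0$) this coefficient equals, up to a nonzero factor, $\bigl(G_{u_{n-1}}D_x\gamma\bigr)_{u_{n-1}}=G_{u_{n-1}u_{n-1}}D_x\gamma$; since $D_x\gamma\ne0$ this yields $G_{u_{n-1}u_{n-1}}=0$, i.e.\ $G$ is (affine, in particular) linear in $u_{n-1}$, which is the required form.

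In the remaining case $p=n/2-1$ we necessarily have $n$ even and $\ord\gamma=n-2$, so $(D_x\gamma)_{u_{n-1}}=\gamma_{u_{n-2}}$, and collecting the coefficient of $u_{2n-2}$ gives the identity $2\gamma_{u_{n-2}}G_{u_{n-1}}+G_{u_{n-1}u_{n-1}}D_x\gamma=0$. On the open set where $G_{u_{n-1}u_{n-1}}\ne0$ this forces $\gamma_{u_{n-2}}\ne0$ as well; differentiating the identity with respect to $u_{n-1}$ (noting that $\gamma$, hence $\gamma_{u_{n-2}}$, is free of $u_{n-1}$) gives a second identity $3\gamma_{u_{n-2}}G_{u_{n-1}u_{n-1}}+G_{u_{n-1}u_{n-1}u_{n-1}}D_x\gamma=0$, and eliminating $D_x\gamma$ between the two produces
\[
2G_{u_{n-1}}G_{u_{n-1}u_{n-1}u_{n-1}}=3\bigl(G_{u_{n-1}u_{n-1}}\bigr)^2 ,
\]
that is, $\bigl((G_{u_{n-1}})^{-1/2}\bigr)_{u_{n-1}u_{n-1}}=0$. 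Hence $1/G_{u_{n-1}}$ is a polynomial of degree at most two in $u_{n-1}$ with coefficients of order less than $n-1$, which is exactly the statement that $G$ is linear fractional in $u_{n-1}$ of the asserted form (where $G_{u_{n-1}u_{n-1}}\equiv0$, $G$ is affine in $u_{n-1}$, a degenerate case of the same form). Finally, when $\ord_{\rm d}\mathcal L^{\rm I}=0$ the only transformation used above was the point transformation provided by Corollary~\ref{CorollaryOnOne0OrderConsLawOfEvolEqs}, which gives the last assertion.

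I expect the main obstacle to be the careful top‑order bookkeeping in the comparison of coefficients: one must verify rigorously that no further summand of the identity reaches order $2n-2$ --- precisely here the hypotheses $n\ge4$ (resp.\ $n\ge5$), $2p\le n-2$ and $\ord G=n-1$ are used --- and dispose of the degenerate subcases ($G_{u_{n-1}u_{n-1}}\equiv0$, $\gamma_{u_{n-2}}\equiv0$, or $\gamma$ a function of $t$ and $x$ only) along the way. A subsidiary but genuinely needed point, already in the first step, is the claim that a contact transformation does not increase the density order of a conservation law.
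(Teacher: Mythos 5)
Your proposal is correct, but its key step differs from the paper's. You share the paper's first stage: reduce $(\mathcal E,\mathcal L^{\rm I})$ to the conserved form \eqref{EqCanonicalFormOfEvolEqsWithOne1stOrderConsLaw} via Theorem~\ref{TheoremOnOneConsLawOfEvolEqs} and Corollary~\ref{CorollaryOnFormOfEvolEqsWithOne1stOrderConsLaw} (a point transformation when $\ord_{\rm d}\mathcal L^{\rm I}=0$), note that the density order of $\mathcal L^{\rm II}$ is not increased, and use the nondegeneracy $D_x\gamma\ne0$ (equivalently $D_x(\delta\rho^{\rm II}/\delta u)\ne0$), which both arguments justify by linear independence of $\mathcal L^{\rm I}$ and $\mathcal L^{\rm II}$. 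From there the paper works at the level of densities: it writes the condition \eqref{cl0a} as $\rho^{\rm II}_t-GD_x(\delta\rho^{\rm II}/\delta u)=D_x\Phi$, bounds $\ord\Phi<n-1$, and simply solves for $G$ as a ratio of two expressions each affine in $u_{n-1}$ (the denominator being free of $u_{n-1}$ in the ``resp.'' case), which yields the linear-fractional (resp.\ affine) form together with explicit lower-order coefficients in one stroke. You instead work at the level of the characteristic, split the cosymmetry identity $\gamma_t+\gamma_*F+F_*^\dagger\gamma=0$ with respect to the top derivative $u_{2n-2}$, and in the borderline case $\ord_{\rm d}\mathcal L^{\rm II}=n/2-1$ derive the vanishing of the Schwarzian of $G$ in $u_{n-1}$, i.e.\ $2G_{u_{n-1}}G_{u_{n-1}u_{n-1}u_{n-1}}=3(G_{u_{n-1}u_{n-1}})^2$, and integrate it; your coefficient bookkeeping and the elimination (which needs no division, since it gives $\gamma_{u_{n-2}}\bigl(2G_{u_{n-1}}G_{u_{n-1}u_{n-1}u_{n-1}}-3(G_{u_{n-1}u_{n-1}})^2\bigr)=0$) are sound, and $\gamma_{u_{n-2}}\not\equiv0$ is automatic when the density order equals $n/2-1$ exactly, the lower-order case falling under your first argument. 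The price of your route is the extra case analysis, the ODE integration (to see that zero Schwarzian gives the M\"obius form with coefficients of order less than $n-1$), and the fact that the conclusions hold on the open sets where $D_x\gamma$ and $\gamma_{u_{n-2}}$ do not vanish --- but this genericity caveat is present at the same level in the paper's own division by $D_x(\delta\rho^{\rm II}/\delta u)$, and the statement is local anyway; what the paper's route buys is brevity and an explicit expression of $G_0,\dots,G_3$ through $\rho^{\rm II}$ and $\Phi$.
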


\begin{proof}
Without loss of generality we can assume that $\ord_{\rm d}\mathcal L^{\rm I}\leqslant\ord_{\rm d}\mathcal L^{\rm II}$.
By Theorem~\ref{TheoremOnOneConsLawOfEvolEqs} and Corollary~\ref{CorollaryOnFormOfEvolEqsWithOne1stOrderConsLaw},
the pair $(\mathcal E,\mathcal L^{\rm I})$ is reduced by a contact transformation to
a pair $(\tilde{\mathcal E},\tilde{\mathcal L}^{\rm I})$, where
the equation~$\tilde{\mathcal E}$ is of form~\eqref{EqCanonicalFormOfEvolEqsWithOne1stOrderConsLaw}
and the conservation law~$\tilde{\mathcal L}^{\rm I}$ has the density~$\tilde u$.
The transformed conservation law $\tilde{\mathcal L}^{\rm II}$ satisfies the same inequality as the original one, $\mathcal L^{\rm II}$,
i.e., $\ord_{\rm d}\tilde{\mathcal L}^{\rm II}\leqslant n/2-1$ (resp.\ $\ord_{\rm d}\tilde{\mathcal L}^{\rm II}<n/2-1$)
if $n\geqslant4$ (resp.\ $n\geqslant5$).
Below we omit tildes over the transformed variables for convenience and assume that
the conservation law~$\mathcal L^{\rm I}$ possesses the density~$u$ and, therefore,
the equation~$\mathcal E$ has the form~\eqref{EqCanonicalFormOfEvolEqsWithOne1stOrderConsLaw}.
Let $(\rho^{\rm II},\sigma^{\rm II})$ be a conserved vector associated with $\mathcal L^{\rm II}$
and $\ord \rho^{\rm II}=\ord_{\rm d}\mathcal L^{\rm II}$.
By~\eqref{cl0a}, it satisfies the condition $ \rho^{\rm II}_t+(D_xG)\delta\rho^{\rm II}/\delta u=D_x\Psi $
for some differential function~$\Psi$.
The last equality can be rewritten as
\begin{equation}\label{Grho}
\rho^{\rm II}_t-GD_x(\delta\rho^{\rm II}/\delta u)=D_x\Phi,
\end{equation}
where $\Phi=\Psi-G \delta\rho^{\rm II}/\delta u$.
Note that $D_x(\delta\rho^{\rm II}/\delta u)\ne0$ because
otherwise the conservation laws $\mathcal L^{\rm I}$ and $\mathcal L^{\rm II}$ are linearly dependent.
As $\ord \rho^{\rm II}_t<n-1$, $\ord G=n-1$ and $\ord D_x(\delta\rho^{\rm II}/\delta u)\leqslant n-1$
(resp.\ $\ord D_x(\delta\rho^{\rm II}/\delta u)<n-1$), we have $\ord\Phi<n-1$.
Finally, as $D_x (\delta\rho^{\rm II}/\delta u)$ must be linear in the highest-order $x$-derivative of $u$ it contains,
expressing $G$ from (\ref{Grho}) and taking into account the above inequalities for $\ord D_x(\delta\rho^{\rm II}/\delta u)$
immediately yields the desired result.
\end{proof}

\begin{theorem}\label{TheoremOnTwoConsLawOfEvolEqs}
Let $\mathcal E$ be an equation of the form~\eqref{EqGenEvol} and
$\mathcal L^{\rm I}$ and $\mathcal L^{\rm II}$ be linearly
independent conservation laws of~$\mathcal E$ of density order~0.
Any such triple $(\mathcal E,\mathcal L^{\rm I},\mathcal L^{\rm II})$
is $\smash{G^\sim_{\rm p}}$-equivalent to a triple
$(\tilde{\mathcal E},\tilde{\mathcal L^{\rm I}},\tilde{\mathcal L^{\rm II}})$, where $\tilde{\mathcal E}$ is an equation from
the same class \eqref{EqGenEvol} that admits conservation laws
$\tilde{\mathcal L^{\rm I}}$ and $\tilde{\mathcal L^{\rm II}}$ with the
characteristics~equal to 1 and $\tilde x$, respectively.
\end{theorem}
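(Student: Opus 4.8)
The plan is to bootstrap from Theorem~\ref{TheoremOnOneConsLawOfEvolEqs} (or rather Corollary~\ref{CorollaryOnOne0OrderConsLawOfEvolEqs}): first use a point transformation to bring $\mathcal L^{\rm I}$ to characteristic $1$, so that $\mathcal E$ already has the form $u_t=D_xG$ with density $\rho^{\rm I}=u$. Having spent most of the equivalence freedom, the remaining group of point transformations fixing this normal form is small — precisely those $\mathcal T\in G^\sim_{\rm p}$ under which $(\tilde\rho^{\rm I},\tilde\sigma^{\rm I})$ stays equivalent to $(u,-G)$. Using the transformation law $\tilde\rho=\rho/D_xX$ from the proof of Theorem~\ref{TheoremOnOneConsLawOfEvolEqs}, the condition $\tilde\rho^{\rm I}=\tilde u+D_{\tilde x}\tilde\Phi$ forces $D_xX\mid(u+D_x\Phi)$ for some differential function $\Phi$; splitting with respect to $u_x$ one finds (as in the one-conservation-law case) that $X=X(t,x)$ with $X_x\ne0$, $U=X_x u+(\text{function of }t,x)$, i.e.\ the residual freedom is the group of fibre-preserving point transformations $\tilde t=T(t)$, $\tilde x=X(t,x)$, $\tilde u=X_x u+W(t,x)$.

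Next I would track the second conservation law under this residual group. Since $\ord_{\rm d}\mathcal L^{\rm II}=0$ we may take a conserved vector $(\rho^{\rm II},\sigma^{\rm II})$ with $\rho^{\rm II}=\rho^{\rm II}(t,x,u)$ and $\rho^{\rm II}_{uu}=0$ (by Lemma~\ref{LemmaOnFormOfConsVecsOfEvolEqs2}, or directly because the density order is $0$ it is affine in $u$); linear independence from $\mathcal L^{\rm I}$ together with triviality of $D_x\zeta$-densities means that, modulo adding multiples of $\mathcal L^{\rm I}$ and trivial vectors, we may write $\rho^{\rm II}=a(t,x)u+b(t,x)$ with $a_x\not\equiv0$ (otherwise $\rho^{\rm II}$ differs from a multiple of $u$ by something in $\mathop{\rm Im}D_x$). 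The characteristic of $\mathcal L^{\rm II}$ is then $\delta\rho^{\rm II}/\delta u=a(t,x)$. So the goal reduces to: choose $T,X,W$ in the residual group so that the transformed characteristic of $\mathcal L^{\rm II}$ equals $\tilde x$. Characteristics transform as cosymmetries, and under $\tilde t=T$, $\tilde x=X$, $\tilde u=X_xu+W$ the characteristic $\gamma=a(t,x)$ becomes $\tilde\gamma=(\text{Jacobian factor})\cdot a$; a short computation using the conserved-vector transformation law $\tilde\rho=\rho/D_xX$ and $\delta/\delta\tilde u$ shows $\tilde\gamma=a/X_x$ evaluated in the new variables. It then suffices to solve $a(t,x)/X_x(t,x)=X(t,x)$, an ODE in $x$ for each fixed $t$, namely $X X_x=a$, i.e.\ $(X^2/2)_x=a$; since $a$ is a given smooth function this has a local solution $X$ with $X_x\ne0$ on the (open, dense) set where $a\ne0$. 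One still must check that this $X$ is compatible with $X_x\ne0$ globally enough and, more importantly, that after fixing $X$ we have not disturbed the first conservation law — but we haven't, because $X=X(t,x)$ with $X_x\ne0$ keeps us inside the residual group, and any residual transformation preserves $\mathcal L^{\rm I}$'s normal form by construction. Finally $W$ and $T$ remain free and can be chosen (e.g.\ $T=t$) without affecting the two characteristics.

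The main obstacle I anticipate is not the existence of $X$ — that is the elementary ODE $XX_x=a$ — but the bookkeeping needed to verify that the residual equivalence group genuinely is the fibre-preserving subgroup $\{\tilde t=T(t),\,\tilde x=X(t,x),\,\tilde u=X_xu+W(t,x)\}$ and that the characteristic of $\mathcal L^{\rm II}$ transforms exactly by the claimed factor $1/X_x$. Both of these require carefully redoing, in the presence of two conservation laws, the $u_x$-splitting argument already used once in the proof of Theorem~\ref{TheoremOnOneConsLawOfEvolEqs}; the delicate point is ensuring that normalizing $\mathcal L^{\rm II}$ does not re-introduce a dependence that would spoil $\mathcal L^{\rm I}$, which is why it is essential first to pin down the residual group and only then solve inside it. A secondary subtlety is the degenerate locus where $a(t,x)$ vanishes: there $\tilde x$ cannot be a characteristic in a neighbourhood of such points, so the reduction is, as usual in this circle of results, only local and valid off the zero set of $a$; this matches the "locally reduced" phrasing of the preceding corollaries and should be stated as such.
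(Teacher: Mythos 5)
Your proposal breaks down at the step where you claim that, because $\ord_{\rm d}\mathcal L^{\rm II}=0$, the density can be taken affine in $u$, i.e.\ $\rho^{\rm II}=a(t,x)u+b(t,x)$ with characteristic $a(t,x)$. This is false: density order $0$ only means that a minimal representative has $\rho^{\rm II}=\rho^{\rm II}(t,x,u)$, with arbitrary dependence on $u$, and Lemma~\ref{LemmaOnFormOfConsVecsOfEvolEqs2} does not apply at order $0$. Since the characteristic of a conservation law is unique and equals $\delta\rho^{\rm II}/\delta u=\rho^{\rm II}_u(t,x,u)$, a conservation law whose characteristic genuinely depends on $u$ (e.g.\ the KdV law with $\rho^{\rm II}=u^2/2$ and characteristic $u$, or the Harry Dym laws with densities $u^{-2}$, $u^{-1}$ treated in Section~\ref{SectionOnExamplesOfThird-orderEvolutionEqs}) has \emph{no} representative affine in $u$. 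Consequently the target equation $X X_x=a(t,x)$ within your residual group of fibre-preserving maps $\tilde x=X(t,x)$, $\tilde u=X_xu+W(t,x)$ cannot produce the required normalization in general: one needs $X_u\ne0$, exactly as in the paper's construction, which takes $X=\rho^{\rm II}_u/\rho^{\rm I}_u$ (for KdV with $\mathcal L^{\rm I},\mathcal L^{\rm II}$ this is the hodograph transformation $\tilde x=u$) and then solves $X_xU_u-X_uU_x=\rho^{\rm I}_u$ for $U$, checking compatibility of the systems for the gauge functions $\Phi,\Psi$.

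A secondary inaccuracy compounds this: your identification of the residual stabilizer of the normal form $\rho^{\rm I}=u$ is too narrow. Redoing the $u_x$-splitting for a point transformation with $\rho^{\rm I}=u$ gives the condition $X_xU_u-X_uU_x=1$ (a unimodularity condition on the $(x,u)$-part), which admits transformations with $X_u\ne0$ — for instance $\tilde x=u$, $\tilde u=-x$ maps the density $u$ to $u/u_x=\tilde u-D_{\tilde x}(\tilde x\tilde u)$, hence preserves $\mathcal L^{\rm I}$ modulo trivial vectors. So even your two-step strategy (first normalize $\mathcal L^{\rm I}$, then act with the stabilizer) could be repaired, but only by allowing this larger stabilizer and by dropping the affine-density reduction; at that point the computation one must carry out is essentially the simultaneous splitting of $D_x\Phi+UD_xX=\rho^{\rm I}$ and $D_x\Psi+XUD_xX=\rho^{\rm II}$ with $X,U$ depending on $u$, which is precisely the paper's proof. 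Your final remark about the zero set of $a$ is then replaced by the paper's observation that linear independence of the laws forces $(\lambda_x,\lambda_u)\ne(0,0)$ for $\lambda=\rho^{\rm II}_u/\rho^{\rm I}_u$, via the cosymmetry equation~\eqref{cosym}.
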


\begin{proof}
Let $(\rho^i,\sigma^i)\in\mathcal L^i$ and $\ord \rho^i=0$, $i={\rm I,II}$.
Then $\gamma^i=\rho^i_u$ is the characteristic of $\mathcal L^i$, $i={\rm I,II}$.
Moreover, $\gamma^{\rm I}$ and $\gamma^{\rm II}$ are linearly independent differential functions
in view of the linear independence of conservation laws $\mathcal L^{\rm I}$ and $\mathcal L^{\rm II}$.
Therefore, we have $(\lambda_x,\lambda_u)\ne(0,0)$, where $\lambda=\gamma^{\rm II}/\gamma^{\rm I}$.
(Indeed, otherwise the substitution of these characteristics into \eqref{cosym}
would imply that $\lambda_t=0$ as well, i.e., the characteristics $\gamma^{\rm I}$ and $\gamma^{\rm II}$ would be linearly dependent.)

We will prove the existence of (and, in fact, construct) a point equivalence
transformation of the form~\eqref{EqPointTransOfGenEvolEqs} with
$T(t)=t$ such that the transformed conserved vectors
$(\tilde \rho^{\rm I},\tilde \sigma^{\rm I})$ and $(\tilde \rho^{\rm II},\tilde \sigma^{\rm II})$
are equivalent to the conserved vectors with the densities $\tilde u$ and $\tilde x\tilde u$, respectively.
In other words, we want to have
$\tilde \rho^{\rm I}=\tilde u+D_{\tilde x}\Phi$ and
$\tilde \rho^{\rm II}=\tilde x\tilde u+D_{\tilde x}\Psi$ for some functions $\Phi=\Phi(t,x,u)$ and $\Psi=\Psi(t,x,u)$.
In the old coordinates these conditions
take the form
$D_x\Phi+UD_xX=\rho^{\rm I}$ and $D_x\Psi+XUD_xX=\rho^{\rm II}$.
Splitting them with respect to~$u_x$ yields
\[
\begin{array}{l}\Phi_x+UX_x=\rho^{\rm I},\\[1ex] \Phi_u+UX_u=0\end{array}
\quad\mbox{and}\quad
\begin{array}{l}\Psi_x+XUX_x=\rho^{\rm II},\\[1ex]\Psi_u+XUX_u=0.\end{array}
\]
After the elimination of $\Phi$ and $\Psi$ from these systems through
cross-differentiation, we arrive at the conditions
$X_xU_u-X_uU_x=\rho^{\rm I}_u$ and $\rho^{\rm I}_uX=\rho^{\rm II}_u$.
If we set $X=\lambda=\rho^{\rm II}_u/\rho^{\rm I}_u$ then $(X_x,X_u)\ne(0,0)$.
This ensures existence of a function $U=U(t,x,u)$ which locally satisfies the
equation $X_xU_u-X_uU_x=\rho^{\rm I}_u$.
It is obvious that the so chosen functions~$X$ and~$U$ are functionally independent and that
the above systems are then compatible with respect to $\Phi$ and $\Psi$,
and hence the point transformation we sought for does exist.
\end{proof}

\begin{corollary}\label{CorollaryOnFormOfEvolEqsWithTwo0thOrderConsLaws}
An equation~$\mathcal E$ of the form~\eqref{EqGenEvol} has (at least) two
linearly independent conservation laws of density order 0 if and
only if it can be locally reduced by a point transformation to the
form
\begin{equation}\label{cf1x}
\tilde{u}_{\tilde t}=D_{\tilde x}^2H(\tilde t,\tilde x,\tilde{u}_0,\dots,\tilde{u}_{n-2}), \quad
H_{\tilde{u}_{n-2}}\ne0.
\end{equation}
\end{corollary}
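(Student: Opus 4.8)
The plan is to deduce this from Theorem~\ref{TheoremOnTwoConsLawOfEvolEqs} in exactly the way Corollary~\ref{CorollaryOnFormOfEvolEqsWithOne1stOrderConsLaw} was deduced from Theorem~\ref{TheoremOnOneConsLawOfEvolEqs}. For \emph{necessity}, suppose $\mathcal E$ possesses two linearly independent conservation laws $\mathcal L^{\rm I}$, $\mathcal L^{\rm II}$ of density order~$0$. By Theorem~\ref{TheoremOnTwoConsLawOfEvolEqs} there is a point transformation carrying the triple $(\mathcal E,\mathcal L^{\rm I},\mathcal L^{\rm II})$ to a triple $(\tilde{\mathcal E},\tilde{\mathcal L}^{\rm I},\tilde{\mathcal L}^{\rm II})$ with $\tilde{\mathcal E}$ again of the form~\eqref{EqGenEvol} and with the characteristics of $\tilde{\mathcal L}^{\rm I}$ and $\tilde{\mathcal L}^{\rm II}$ equal to $1$ and $\tilde x$, respectively. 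Suppressing tildes: writing \eqref{clch} for the characteristic~$1$ and absorbing the null-divergence summand so that the density equals exactly~$u$ yields $u_t+D_x\sigma^{\rm I}=u_t-F$ on the whole jet space, i.e.\ $F=D_xG$ with $G:=-\sigma^{\rm I}$; doing the same for the characteristic~$x$ (so the density is $xu$ and $D_t(xu)=xu_t$) gives $D_x\sigma^{\rm II}=-xF=-xD_xG$. Since $xD_xG=D_x(xG)-G$, the latter rewrites as $G=D_x(\sigma^{\rm II}+xG)$, so $G=D_xH$ for the differential function $H:=\sigma^{\rm II}+xG$, whence $F=D_x^2H$. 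As $F$ has order $n$ with $F_{u_n}\ne0$, applying $D_x$ twice forces $\ord H=n-2$ and $H_{u_{n-2}}\ne0$, so after the transformation $\mathcal E$ is of the form~\eqref{cf1x}.

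For \emph{sufficiency}, assume $\mathcal E$ is reducible by a point transformation $\mathcal T$ to an equation of the form~\eqref{cf1x}. That equation admits the conserved vector $(\tilde u,\,-D_{\tilde x}H)$, of density order~$0$ and characteristic~$1$, and the conserved vector $(\tilde x\tilde u,\,H-\tilde xD_{\tilde x}H)$: one checks $D_{\tilde t}(\tilde x\tilde u)+D_{\tilde x}(H-\tilde xD_{\tilde x}H)=\tilde x(\tilde u_{\tilde t}-D_{\tilde x}^2H)$, which vanishes modulo the equation, the density order is again~$0$, and the characteristic is $\tilde x$. The two associated conservation laws are linearly independent because their characteristics are. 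Pulling them back by $\mathcal T$ produces two linearly independent nontrivial conservation laws of $\mathcal E$; since a point transformation~\eqref{EqPointTransOfGenEvolEqs} has $\tilde x=X(t,x,u)$ independent of $u_x$ and transforms densities by $\rho=\tilde\rho\,D_xX$, the pulled-back densities are linear in~$u_x$, hence by Lemma~\ref{LemmaOnFormOfConsVecsOfEvolEqs2} equivalent to densities of order~$\le0$, and, being nontrivial, the preimages have density order exactly~$0$.

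Essentially all the real content is absorbed into Theorem~\ref{TheoremOnTwoConsLawOfEvolEqs}; the rest is bookkeeping. The only computations that need care are the elementary identity $xD_xG=D_x(xG)-G$, which is what shows $G\in\mathop{\rm Im}D_x$ and hence $F=D_x^2H$ on the necessity side, and the order count $F=D_x^2H\Rightarrow\ord H=n-2$, $H_{u_{n-2}}\ne0$ (using $F_{u_n}\ne0$) together with the density-order bookkeeping for the pullback; none of these constitutes a genuine obstacle, so this really is a corollary in the strict sense.
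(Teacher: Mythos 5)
Your proof is correct and follows essentially the same route as the paper's: apply Theorem~\ref{TheoremOnTwoConsLawOfEvolEqs}, normalize the densities to $u$ and $xu$, combine $D_x\sigma^{\rm I}=-F$ and $D_x\sigma^{\rm II}=-xF$ to get $F=D_x^2H$ with the same $H$ (your $\sigma^{\rm II}+xG=\sigma^{\rm II}-x\sigma^{\rm I}$), and for the converse pull back the conservation laws with characteristics $1$ and $\tilde x$. The extra bookkeeping you supply (the order count for $H$ and the Lemma~\ref{LemmaOnFormOfConsVecsOfEvolEqs2} argument that the pulled-back densities have density order $0$) only makes explicit details the paper leaves implicit.
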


\begin{proof}
If $\mathcal E$ of the form~\eqref{EqGenEvol} admits (at least) two
linearly independent conservation laws of density order~0, then by Theorem~\ref{TheoremOnTwoConsLawOfEvolEqs}
we can assume (modulo a suitably chosen point transformation, if necessary)
that $\mathcal E$ has the conservation laws~$\mathcal L^{\rm I}$ and~$\mathcal L^{\rm II}$
with the characteristics~1 and~$x$, respectively.
Then there exist conserved vectors
$(\rho^{\rm I},\sigma^{\rm I})\in\mathcal L^{\rm I}$ and
$(\rho^{\rm II},\sigma^{\rm II})\in\mathcal L^{\rm II}$ such that
\[
D_t\rho^{\rm I}+D_x\sigma^{\rm I}=u_t-F, \quad D_t\rho^{\rm II}+D_x\sigma^{\rm II}=x(u_t-F).
\]
Up to the equivalence of conserved vectors modulo trivial ones we have $\rho^{\rm I}=u$ and $\rho^{\rm II}=xu$.
Hence $D_x\sigma^{\rm I}=-F$ and $D_x\sigma^{\rm II}=-xF$.
Combining these equalities, we find that $\sigma^{\rm I}=-D_x(\sigma^{\rm II}-x\sigma^{\rm I})$, i.e.,
$F=D_x^2(\sigma^{\rm II}-x\sigma^{\rm I})$.
As a result, we can represent the equation~$\mathcal E$ in the form $u_t=D_x^2H$,
where $H=\sigma^{\rm II}-x\sigma^{\rm I}$, $\ord H=n-2$.

Conversely, assume that~$\mathcal E$ is reduced to the equation
$\tilde u_{\tilde t}=D_{\tilde x}^2H(\tilde t,\tilde x,\tilde
u,\dots,\tilde u_{n-2})$, where $H_{\tilde u_{n-2}}\ne0$,
through a point transformation~$\mathcal T$. The transformed
equation $\tilde u_{\tilde t}=D_{\tilde x}^2H$ admits at least
two linearly independent conservation laws, in particular,
those with the characteristics~1 and~$\tilde x$. Their preimages under~$\mathcal T$ are linearly
independent conservation laws of~$\mathcal E$ whose density orders
are zero, and the result follows.
\end{proof}

\begin{corollary}\label{CorollaryOnFormOfEvolEqsWithTwo1thAndLowOrderConsLaws2}
If an equation~$\mathcal E$ of the form~\eqref{EqGenEvol} with
$n\geqslant5$ (resp.\ $2\leqslant n\leqslant4$) has two linearly independent conservation laws
$\mathcal L^{\rm I}$ and $\mathcal L^{\rm II}$ with 
$\ord_{\rm d}\mathcal L^{\rm I}\leqslant1$ and $\ord_{\rm d}\mathcal L^{\rm II}<n/2-1$
(resp.\ $\ord_{\rm d}\mathcal L^{\rm I}=\ord_{\rm d}\mathcal L^{\rm II}=0$),
then the right-hand side~$F$ of $\mathcal E$ has the form
\[
F=F_3 u_n+F_2 u_{n-1}^2+F_1u_{n-1}+F_0,
\]
where $F_0$, \dots, $F_3$ are differential functions of order less than $n-1$.
\end{corollary}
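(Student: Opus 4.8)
The plan is to derive the statement from the normal forms established earlier in this section together with an analysis of how the relevant class of right-hand sides behaves under the equivalence group, so that the structural information can be carried from a reduced equation back to $\mathcal E$.

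First I would put $\mathcal E$ into a canonical form. Under the present hypotheses, for $n\geqslant5$ Corollary~\ref{CorollaryOnFormOfEvolEqsWithTwo1thAndLowOrderConsLaws} applies directly and produces a contact transformation $\mathcal T\in G^\sim_{\rm c}$ taking $\mathcal E$ to an equation $\tilde u_{\tilde t}=D_{\tilde x}G$ with $G=G_1\tilde u_{n-1}+G_0$ and $\ord G_0,\ord G_1\leqslant n-2$; for $2\leqslant n\leqslant4$ (both density orders then being $0$) Corollary~\ref{CorollaryOnFormOfEvolEqsWithTwo0thOrderConsLaws} produces a point transformation $\mathcal T\in G^\sim_{\rm p}$ taking $\mathcal E$ to $\tilde u_{\tilde t}=D_{\tilde x}^2H$ with $\ord H=n-2$. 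A short computation then exhibits the right-hand side $\tilde F$ of the reduced equation in the required shape. Indeed,
\[
D_{\tilde x}(G_1\tilde u_{n-1}+G_0)=G_1\tilde u_n+\frac{\p G_1}{\p\tilde u_{n-2}}\,\tilde u_{n-1}^2+(\text{order}\leqslant n-2)\,\tilde u_{n-1}+(\text{order}\leqslant n-2),
\]
and likewise $D_{\tilde x}^2H=(\p H/\p\tilde u_{n-2})\tilde u_n+(\p^2H/\p\tilde u_{n-2}^2)\tilde u_{n-1}^2+(\text{order}\leqslant n-2)\tilde u_{n-1}+(\text{order}\leqslant n-2)$; in either case all the coefficients are of order $\leqslant n-2<n-1$.

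The remaining and main step is to show that the set of right-hand sides of the form $F=F_3u_n+F_2u_{n-1}^2+F_1u_{n-1}+F_0$ with $\ord F_i<n-1$ is invariant under $G^\sim_{\rm c}$ for $n\geqslant5$ (resp.\ under $G^\sim_{\rm p}$ for $n\leqslant4$), so that $F$, being obtained from $\tilde F$ via~\eqref{EqTransRightHandSideOfGenEvolEqs} (resp.~\eqref{EqPointTransOfGenEvolEqs}), inherits the property. This rests on two points. First, that formula reads $F=\mu\tilde F+\nu$, where $\tilde F$ is re-expressed through $t$, $x$ and the $u_j$ by means of the prolongation of $\mathcal T$, and $\mu=T_t/(U_u-X_uV)$, $\nu=-(U_t-X_tV)/(U_u-X_uV)$ (resp.\ $\mu=T_tD_xX/\Delta$ and analogously for $\nu$); since $T$, $X$, $U$, $V$ have order $\leqslant1$, so do $\mu$ and $\nu$, and $\mu$ is well defined because $U_u-X_uV\neq0$ (resp.\ $\Delta\neq0$) — otherwise the reduced equation would be of order $\leqslant1$. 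Second, from the prolongation rule $\tilde u_{j+1}=(D_xX)^{-1}D_x\tilde u_j$, the bound $\ord D_xX\leqslant2$ in the contact case (resp.\ $\leqslant1$ in the point case), and the fact that $\tilde u_j$ depends only on $t$, $x$, $u_0,\dots,u_j$, an easy induction shows that $\tilde u_j$ is affine in $u_j$ with $\p\tilde u_j/\p u_j$ of order $\leqslant2$ (resp.\ $\leqslant1$) for $j\geqslant3$ (resp.\ $j\geqslant2$); carrying this one more step, for $n\geqslant5$ (resp.\ $n\geqslant3$) the function $\tilde u_n$ is affine in $u_n$ with coefficient of order $\leqslant2$ (resp.\ $\leqslant1$) and affine in $u_{n-1}$ with coefficient of order $\leqslant n-2$, the function $\tilde u_{n-1}$ is affine in $u_{n-1}$ with coefficient of order $\leqslant2$ (resp.\ $\leqslant1$) and remainder of order $\leqslant n-2$, and $\tilde u_j$ is of order $\leqslant n-2$ for $j\leqslant n-2$. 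Substituting all this into $F=\mu(\tilde F_3\tilde u_n+\tilde F_2\tilde u_{n-1}^2+\tilde F_1\tilde u_{n-1}+\tilde F_0)+\nu$ and collecting the powers of $u_n$ and of $u_{n-1}$ then shows that $F$ is again affine in $u_n$ and quadratic in $u_{n-1}$ with all coefficients of order $\leqslant n-2$. The split $n\geqslant5$ versus $n\leqslant4$ is dictated precisely here: for the smaller orders one needs $\p\tilde u_j/\p u_j$ of order $\leqslant1$, which forces the use of point transformations.

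Finally, I would remark that the weaker assertion $F_{u_nu_n}=0$ — that $F$ is affine in $u_n$ — can be obtained much more cheaply and with no transformation at all: substituting the characteristic $\gamma^{\rm II}$ of $\mathcal L^{\rm II}$, whose order $2\ord_{\rm d}\mathcal L^{\rm II}$ is less than $n$, into the cosymmetry equation~\eqref{cosym} and extracting the coefficient of the top derivative $u_{2n}$ — to which only $(-D_x)^n(F_{u_n}\gamma^{\rm II})$ contributes — gives $F_{u_nu_n}\gamma^{\rm II}=0$, hence $F_{u_nu_n}=0$ (indeed any nonzero cosymmetry of order $<n$ already forces this). The real work, I expect, lies in the second point above: the order bookkeeping for the prolongations of $\tilde u_n$ and $\tilde u_{n-1}$, and in particular checking that the $u_{n-1}$-dependence of $\tilde u_n$ stays affine with a coefficient of order at most $n-2$, since one must keep track of exactly which derivatives occur in the arguments of the various bounded-order coefficients — and it is there that the restrictions on $n$ are used.
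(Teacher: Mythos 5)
Your route --- reduce $\mathcal E$ via Corollary~\ref{CorollaryOnFormOfEvolEqsWithTwo1thAndLowOrderConsLaws} (contact case, $n\geqslant5$) or Corollary~\ref{CorollaryOnFormOfEvolEqsWithTwo0thOrderConsLaws} (point case), note that $D_{\tilde x}(G_1\tilde u_{n-1}+G_0)$ and $D_{\tilde x}^2H$ have the required shape, and then transport that shape back through~\eqref{EqTransRightHandSideOfGenEvolEqs} resp.~\eqref{EqPointTransOfGenEvolEqs} --- is exactly the derivation this corollary is meant to have, and your bookkeeping in the contact case $n\geqslant5$ is sound. The trouble is at the lower end of the point case. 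For $n=3$ your intermediate claim that $\tilde u_n$ is affine in $u_{n-1}$ with coefficient of order $\leqslant n-2$ is false: writing $\tilde u_2=\alpha u_2+\beta$ with $\ord\alpha\leqslant1$, one has $\tilde u_3=\bigl(\alpha u_3+(D_x\alpha)u_2+D_x\beta\bigr)/D_xX$, and $D_x\alpha$ contains $\alpha_{u_1}u_2$ (nonzero e.g.\ for the hodograph transformation), so $\tilde u_3$ is \emph{quadratic} in $u_2$ and the full coefficient of $u_2$ has order $2=n-1$. This particular slip is repairable, since the offending contribution is a $u_2^2$-term with order-$1$ coefficient, which the target form tolerates; but your final ``collect the powers of $u_{n-1}$'' step tacitly uses the affineness, so the $n=3$ case needs this spelled out.

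More seriously, your concluding bound ``all coefficients of order $\leqslant n-2$'' is unattainable at $n=2$: there $\mu=T_tD_xX/\Delta$ and $\partial\tilde u_2/\partial u_2=\Delta/(D_xX)^3$ genuinely depend on $u_1$, so the pullback only yields coefficients of order $\leqslant1=n-1$, not $<n-1$, and this is not a defect of your estimate that sharper bookkeeping could cure. Indeed, $u_t=D_x^2H(u)$ has the conservation laws with characteristics $1$ and $x$, and its hodograph image ($\tilde x=u$, $\tilde u=x$), namely $\tilde u_{\tilde t}=H'(\tilde x)\tilde u_{\tilde x\tilde x}/\tilde u_{\tilde x}^{\,2}-H''(\tilde x)/\tilde u_{\tilde x}$, still possesses two linearly independent zero-order conservation laws (densities $\tilde u$ and $\tilde u^2/2$), yet the coefficient of $\tilde u_{\tilde x\tilde x}$ depends on $\tilde u_{\tilde x}$. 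So the displayed form with $\ord F_i<n-1$ cannot be extracted from the reduction-plus-pullback argument when $n=2$; your proposal silently asserts it for all $2\leqslant n\leqslant4$, and that is a genuine gap --- the case $n=2$ must either be excluded, treated by a separate argument, or stated with a weaker order bound. Your closing remark that a low-order cosymmetry alone forces $F_{u_nu_n}=0$ is correct but, as you say, only recovers the affineness in $u_n$, not the quadratic structure in $u_{n-1}$ or the order bounds on the coefficients.
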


\begin{remark}
If in the proof of Corollary~\ref{CorollaryOnFormOfEvolEqsWithTwo0thOrderConsLaws}
we replace the conservation laws~$\mathcal L^{\rm I}$ and~$\mathcal L^{\rm II}$ by linear combinations thereof,
$\hat{\mathcal L}^{\rm I}=a_{11}\mathcal L^{\rm I}+a_{12}\mathcal L^{\rm II}$ and
$\hat{\mathcal L}^{\rm II}=a_{21}\mathcal L^{\rm I}+a_{22}\mathcal L^{\rm II}$,
where $a_{ij}$, $i,j=1,2$, are arbitrary constants such that $a_{11}a_{22}-a_{12}a_{21}\ne0$,
then the associated equations of the form~\eqref{cf1x} are related through the transformation
\[
\hat t =\tilde t, \quad
\hat x=\frac{a_{22}\tilde x+a_{21}}{a_{12}\tilde x+a_{11}}, \quad
\hat u=\frac{(a_{12}\tilde x+a_{11})^3}{a_{11}a_{22}-a_{12}a_{21}}\tilde u, \quad
\hat H=\frac{a_{11}a_{22}-a_{12}a_{21}}{a_{12}\tilde x+a_{11}}H,
\]
where $\hat t$, $\hat x$, $\hat u$ and the differential function~$\hat H$ correspond
to the conservation laws $\hat{\mathcal L}^{\rm I, II}$.
Such transformations, considered for all admissible values of $a_{ij}$, $i,j=1,2$,
form a subgroup~$\mathcal G$ of the point equivalence group
for the class of equations of the form~\eqref{cf1x}.
Thus, up to the $\mathcal G$-equivalence we can assume that the form~\eqref{cf1x} of the equation~$\mathcal E$
is associated with the two-dimensional {\em subspace} spanned by its conservation laws $\mathcal L^{\rm I}$
and $\mathcal L^{\rm II}$ rather than with $\mathcal L^{\rm I}$
and $\mathcal L^{\rm II}$ {\em per se}.
\end{remark}

\section{Examples: third-order evolution equations}%
\label{SectionOnExamplesOfThird-orderEvolutionEqs}

\begin{example}
We start with the so-called Harry Dym (HD) equation, see e.g.\ \cite[Section~20.2]{Ibragimov1985}
and references therein for more details:
\[
u_t=u^3 u_{xxx}.
\]
The subspace of its conservation laws of density order not greater than one
is five-dimensional and generated by the zero-order conservation laws $\mathcal L^i$, $i={\rm I},\dots, {\rm IV}$, with the densities
$\rho^{\rm I}=u^{-2}$, $\rho^{\rm II}=xu^{-2}$, $\rho^{\rm III}=x^2u^{-2}$, and $\rho^{\rm IV}=u^{-1}$,
and the first-order conservation law $\mathcal L^V$ with the density $\rho^{\rm V}=u_x^2u^{-1}$.

The first three densities agree in the sense that $\rho^{\rm II}/\rho^{\rm I}=\rho^{\rm III}/\rho^{\rm II}$.
Hence upon introducing new variables $\tilde t=-2t\sign u$, $\tilde x=x$, $\tilde u=u^{-2}$
obtained by applying Theorem~\ref{TheoremOnTwoConsLawOfEvolEqs} to~$\mathcal L^{\rm I}$ and~$\mathcal L^{\rm II}$,
the HD equation can be rewritten in an even more specific than \eqref{cf1x}, and also well-known, conservative form
$
\tilde u_{\tilde t}=D_{\tilde x}^3(\tilde u^{-1/2}).
$
(We transformed~$t$ above in order to simplify the transformed equation.)
The transformed equation obviously admits conservation laws with the characteristics
equal to 
1, $x$ and $x^2$.

For the pair of conservation laws~$\mathcal L^{\rm IV}$ and~$\mathcal L^{\rm I}$ Theorem~\ref{TheoremOnTwoConsLawOfEvolEqs} yields the transformation
$\tilde t=t$, $\tilde x=-2/u$, $\tilde u=x/2$ which maps the HD equation into the equation
\[
\tilde u_{\tilde t}=D_{\tilde x}^2\left(\frac1{2\tilde x^3\tilde u_{\tilde x}^2}\right).
\]

The conservation law $\mathcal L^{\rm V}$ is mapped
into a conservation law with the characteristic $1$
by the contact transformation
$\tilde t=t$,
$\tilde x=u_x^2/u$,
$\tilde u=u-2u/u_x$,
$\tilde u=u^2/u_x^3$
constructed using the method from the proof of Theorem~\ref{TheoremOnOneConsLawOfEvolEqs}.
The corresponding transformed equation reads
\[
\tilde u_{\tilde t}=D_{\tilde x}\left(\frac{-\tilde x^8\tilde u_{\tilde x}^6}
{4(2\tilde x\tilde u_{\tilde x\tilde x}+3\tilde u_{\tilde x})^2}\right).
\]
\end{example}

\begin{example}
Consider now the class of KdV-type equations
\begin{equation}\label{kdvtype}
u_t=u_{xxx}+f(u)u_x.
\end{equation}
Any equation from this class admits at least three conservation laws $\mathcal L^i$, $i={\rm I},\dots, {\rm III}$,
with the densities $\rho^{\rm I}=u$, $\rho^{\rm II}=u^2/2$, $\rho^{\rm III}=-u_x^2/2+\check f(u)$,
where  $\p\hat f/\p u=f$, $\p\check f/\p u=\hat f$.
It is straightforward to verify
that if $\p^3 f/\p u^3\neq 0$ these conservation laws form a basis in the space of
the conservation laws of density order not greater than one.

{\samepage
The reduction \eqref{kdvtype} to the
form~\eqref{EqCanonicalFormOfEvolEqsWithOne1stOrderConsLaw} using
$\mathcal L^{\rm I}$ (resp.\ $\mathcal L^{\rm II}$) according to
Theorem~\ref{TheoremOnOneConsLawOfEvolEqs} is immediate. The conservation law
$\mathcal L^{\rm I}$ gives rise to the identity transformation and the
representation $u_t=D_x(u_{xx}+\hat f(u))$ for \eqref{kdvtype}. 
The transformation associated with $\mathcal L^{\rm II}$ is
$\tilde t=t$, $\tilde x=x$ and $\tilde u=\rho^{\rm II}=u^2/2$.
It maps equation~\eqref{kdvtype} into
\[
\tilde u_{\tilde t}=D_{\tilde x}\left(\tilde u_{\tilde x\tilde x}
-\frac34\frac{\tilde u_{\tilde x}^2}{\tilde u}
+\varepsilon\sqrt{2\tilde u}\hat f(\varepsilon\sqrt{2\tilde u})-\check f(\varepsilon\sqrt{2\tilde u}) \right),
\]
where $\varepsilon=\sign u$.

}

Now consider the conservation laws $\mathcal L^{\rm I}$ and $\mathcal L^{\rm II}$ and apply
Theorem~\ref{TheoremOnTwoConsLawOfEvolEqs}.
We can directly follow the procedure from the proof
of this theorem and set $\tilde t=t$, $\tilde x=\rho^{\rm II}_u=u$ and $\tilde u=x$.
This is nothing but the hodograph transformation interchanging~$x$
and~$u$. It reduces equation~\eqref{kdvtype} to the equation (cf.\ \cite{fol2})
\[
\tilde u_{\tilde t}=D_{\tilde x}^2\left(\frac1{2\tilde u_{\tilde x}^2}-\check f(\tilde x) \right).
\]
\end{example}

\begin{example}
The KdV equation, i.e., equation~\eqref{kdvtype} with
$f(u)=u$, possesses one more linearly independent zero-order conservation law $\mathcal L^{\rm IV}$ with
the density $\rho^{\rm IV}=xu+tu^2/2$, cf.\ \cite{Miura&Gardner&Kruskal1968}. This
gives more possibilities for reduction to the
forms~\eqref{EqCanonicalFormOfEvolEqsWithOne1stOrderConsLaw}
and~\eqref{cf1x}.

In analogy with the previous example, we find that the transformation associated with $\mathcal L^{\rm IV}$ is
$\tilde t=t$, $\tilde x=x$ and $\tilde u=\rho^{\rm IV}=xu+tu^2/2$.
It maps the KdV equation into
\[
\tilde u_{\tilde t}=D_{\tilde x}\left(\tilde u_{\tilde x\tilde x}
-\frac32\frac{\tilde t}Z\tilde u_{\tilde x}^2-3\frac{\tilde x}{Z}\tilde u_{\tilde x}
\pm\frac{Z^{3/2}}{3\tilde t^2}+3\frac{\tilde u}Z-\frac{\tilde x}{\tilde t}\tilde u-\frac{\tilde x^3}{3\tilde t^2}
\right),
\]
where $Z=\tilde x^2+2\tilde t\tilde u$.

For the pair of the conservation laws $\mathcal L^{\rm I}$ and~$\mathcal L^{\rm IV}$ we have the transformation of the form
$\tilde t=t$, $\tilde x=\rho^{\rm IV}_u=x+tu$ and $\tilde u=u$, and
the transformed equation reads
\[
\tilde u_{\tilde t}=D_{\tilde x}\left(\frac{\tilde u_{\tilde x\tilde x}}{(1-t\tilde u_{\tilde x})^3} \right)
=D_{\tilde x}^2\left(\frac{(1-t\tilde u_{\tilde x})^{-2}}{2t}\right).
\]

Another pair of the conservation laws, $\mathcal L^{\rm II}$ and~$\mathcal L^{\rm IV}$,
gives rise to a more complicated transformation
$\tilde t=t$, $\tilde x=x/u+t$, $\tilde u=u^3/3$, and
a more cumbersome transformed equation,
\[
\tilde u_{\tilde t}=D_{\tilde x}^2\left(\frac
{((\tilde x-\tilde t)\tilde u_{\tilde x}+6\tilde u)\tilde u_{\tilde x}}
{2((\tilde x-\tilde t)\tilde u_{\tilde x}+3\tilde u)^2}
\right).
\]
Note that exhaustive lists of one- and two-dimensional subspaces of zero-order conservation laws of the KdV equation
that are not equivalent with respect to the Lie point symmetry group of the latter
are $\{\langle\mathcal L^{\rm I}\rangle,\langle\mathcal L^{\rm II}\rangle,\langle\mathcal L^{\rm IV}\rangle\}$
and $\{\langle\mathcal L^{\rm I},\mathcal L^{\rm II}\rangle,\langle\mathcal L^{\rm I},
\mathcal L^{\rm IV}\rangle,\langle\mathcal L^{\rm II},\mathcal L^{\rm IV}\rangle\}$,
respectively.
Therefore, the above description of normal forms~\eqref{EqCanonicalFormOfEvolEqsWithOne1stOrderConsLaw}
and~\eqref{cf1x} related to zero-order conservation laws 
of the KdV equation is complete modulo the action of the Lie point symmetry group
of the KdV equation, cf.\ Remark~1.
\end{example}

\begin{example}
The Schwarzian KdV equation
\[
u_t=u_{xxx}-\frac32\frac{u_{xx}^2}{u_x}
\]
possesses no zero-order conservation laws.
The subspace of its first-order conservation laws
is spanned by the conservation laws $\mathcal L^i$, $i={\rm I},\dots, {\rm III}$, with
the densities $\rho^{\rm I}=1/u_x$, $\rho^{\rm II}=u/u_x$, $\rho^{\rm III}=u^2/u_x$.
For transforming $\mathcal L^{\rm I}$ into a conservation law with the density $u$, we construct, following the proof
of Theorem~\ref{TheoremOnOneConsLawOfEvolEqs}, the contact transformation
\[
\tilde t=t,\quad
\tilde x=u_x,\quad
\tilde x=\frac{2x}{u_x^2}-\frac{2u}{u_x^3},\quad
\tilde u=-\frac{4x}{u_x^3}+\frac{6u}{u_x^4},
\]
which maps the Schwarzian KdV equation into
\[
\tilde u_{\tilde t}=D_{\tilde x}\left(\frac{-4\tilde x^{-5}}
{(\tilde x^2\tilde u_{\tilde x\tilde x}+6\tilde x\tilde u_{\tilde x}+6\tilde u)^2} \right).
\]
\end{example}

Further examples of normal forms for low-order nonlinear evolution equations,
including physically relevant examples like the nonlinear 
diffusion-convection equations, can be found in~\cite{Popovych&Samoilenko2008}.

\section{Conservation laws of linear evolution equations}%
\label{SectionOnCLsOfLinEvolEqs}

Any linear partial differential equation admits conservation laws
whose characteristics depend on independent variables only and run through the set of solutions of
the adjoint equation.
The corresponding conserved vectors are linear with respect to the unknown function and its derivatives.
It is natural to call the conservation laws of this kind \emph{linear} \cite[Section~5.3]{Olver1993}.
Let~us stress that, following the literature, here and below we allow for a slight abuse of terminology by calling a conservation law
linear (resp.\ quadratic) when it contains a conserved vector which is linear (resp.\ quadratic) in the totality of variables $u_0,u_1,u_2,\dots$.

The problem of describing other kinds of conservation laws for general linear partial differential equations
is quite difficult. However,
it can be solved for certain special classes of equations including linear $(1+1)$-dimensional evolution equations.

Consider an equation $\mathcal E$ of form~\eqref{EqGenEvol}, where the function $F$ is linear in $u_0$, \dots, $u_n$,
i.e.,
\[
F=\mathfrak F\,u=\sum_{i=0}^n A^i(t,x)u_i, \quad\mbox{where}\quad
\mathfrak F=\sum_{i=0}^n A^i(t,x)D_x^i, \quad A^n\ne0.
\]
Thus, the equation~$\mathcal E$  reads
\begin{equation}\label{lineq}
u_t=\mathfrak F\,u.
\end{equation}
Then the condition~\eqref{cosym} for cosymmetries takes the form
\[
D_t\gamma+\mathfrak F^\dagger\gamma=0\bmod\check{\mathcal E}, \quad\mbox{where}\quad
\mathfrak F^\dagger=\sum_{i=0}^n (-D_x)^i\circ A^i(t,x).
\]
The operator~$\mathfrak F^\dagger$ is the formal adjoint of~$\mathfrak F$.
Writing out the condition~\eqref{cosym} yields
\begin{equation}\label{EqForCharsOfLinEvolEqs}
\gamma_t +\sum_k\gamma_{u_k}\sum_{i=0}^n\sum_{j=0}^k\binom kj
A^i_{k-j}u_{i+j}+\sum_{i=0}^n(-1)^i\sum_{s=0}^i\binom is
A^i_{i-s}D_x^s\gamma =0,
\end{equation}
where $A_j^i=\p^j A^i/\p x^j$.

A function $v=v(t,x)$ is a cosymmetry of the equation $\mathcal E$
if and only if it is a solution of the {\em adjoint} equation
$\mathcal E^*$:
\begin{equation}\label{ale}
v_t+\mathfrak F^\dagger v=0.
\end{equation}
Any cosymmetry of~$\mathcal E$ that does not depend on $u$ and the derivatives thereof
is a characteristic of a linear conservation law of~$\mathcal E$,
and any linear conservation law of~$\mathcal E$ has a characteristic of this form.
Namely, a solution $v=v(t,x)$ of the adjoint equation~$\mathcal E^*$ corresponds to
the conserved vector $(\rho,\sigma)$ of~$\mathcal E$
with $\rho=v(t,x) u$ and $\sigma=\sum_{i=0}^{n-1}\sigma^i(t,x)u_i$.
The coefficients $\sigma^i$ are found recursively from the
equations
\begin{equation}\label{sigma-lin}
\sigma^{n-1}=-v A^n, \quad \sigma^i=-v A^{i+1}-\sigma^{i+1}_x, \quad
i=n-2,\dots,0.
\end{equation}

It turns out that {\em all} cosymmetries of {\em even-order} equations (\ref{lineq}) are of this form.

\begin{theorem}\label{lin-cl-th1}
For any linear $(1+1)$-dimensional evolution equation of even order,
all its cosymmetries depend only on $x$ and $t$, and the space of all cosymmetries
is isomorphic to the solution space of the associated adjoint equation.
\end{theorem}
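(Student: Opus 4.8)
The plan is to extract the statement from the leading (highest-order) term of the cosymmetry condition \eqref{cosym}, written in its on-the-jet-space form \eqref{EqForCharsOfLinEvolEqs}. Recall that a cosymmetry $\gamma$ of \eqref{lineq} may be assumed to be a differential function $\gamma=\gamma(t,x,u_0,u_1,\dots)$ satisfying $\gamma_t+\gamma_*(\mathfrak F u)+\mathfrak F^\dagger\gamma=0$ identically on the jet space, that is, \eqref{EqForCharsOfLinEvolEqs}. Suppose, for contradiction, that $\gamma$ depends non-trivially on at least one of the $u_j$, and let $m$ be the greatest integer with $\gamma_{u_m}\neq0$.

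First I would pin down the jet coordinate of maximal order occurring in \eqref{EqForCharsOfLinEvolEqs}, namely $u_{m+n}$, and compute its coefficient. In $\gamma_*(\mathfrak F u)=\sum_{k\le m}\gamma_{u_k}D_x^k\bigl(\sum_{i=0}^n A^i u_i\bigr)$ the coordinate $u_{m+n}$ occurs only for $k=m$ and $i=n$ (since $D_x^m(A^n u_n)=A^n u_{m+n}+\dots$), with coefficient $\gamma_{u_m}A^n$. In $\mathfrak F^\dagger\gamma=\sum_{i=0}^n(-D_x)^i(A^i\gamma)$ I would use the elementary fact that $D_x^i$ applied to a function of $u_0,\dots,u_m$ is linear in the top coordinate $u_{m+i}$ with coefficient equal to its $u_m$-derivative; hence the $i$-th summand has order $m+i$, only $i=n$ reaches $u_{m+n}$, and its contribution is $(-1)^n A^n\gamma_{u_m}$. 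Finally $\gamma_t$ has order $m<m+n$ and contributes nothing. Since $u_{m+n}$ is an independent jet coordinate, \eqref{EqForCharsOfLinEvolEqs} forces $(1+(-1)^n)A^n\gamma_{u_m}=0$.

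For $n$ even this reads $2A^n\gamma_{u_m}=0$, and $A^n\neq0$ gives $\gamma_{u_m}=0$, contradicting the choice of $m$. Hence $\gamma$ depends only on $t$ and $x$. For such $\gamma$ one has $\gamma_*=0$, so \eqref{EqForCharsOfLinEvolEqs} collapses to $\gamma_t+\mathfrak F^\dagger\gamma=0$, which is exactly the adjoint equation \eqref{ale}; conversely, every solution $v=v(t,x)$ of \eqref{ale} is visibly a cosymmetry of \eqref{lineq}. Thus the identity map is a linear isomorphism between the space of cosymmetries of $\mathcal E$ and the solution space of $\mathcal E^*$, which proves the theorem.

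I do not expect a serious obstacle here; the one point requiring care is the bookkeeping that verifies no other term of \eqref{EqForCharsOfLinEvolEqs} contributes at order $m+n$ and that the coefficient is genuinely $(1+(-1)^n)A^n\gamma_{u_m}$, with no cancellation having been overlooked. It is worth noting that the argument breaks down for odd $n$ precisely because $1+(-1)^n=0$ in that case, in accordance with the well-known occurrence of higher-order cosymmetries (e.g.\ for the KdV equation) in the odd-order setting — so this parity phenomenon is exactly what makes the even-order statement true.
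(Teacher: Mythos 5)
Your proof is correct and follows essentially the same route as the paper: extracting the coefficient of the top-order variable $u_{n+r}$ in \eqref{EqForCharsOfLinEvolEqs}, which is $(1+(-1)^n)A^n\gamma_{u_r}$, so that for even $n$ one gets $2A^n\gamma_{u_r}=0$ and hence a contradiction. Your explicit bookkeeping of the leading terms and the final reduction to the adjoint equation \eqref{ale} merely spell out what the paper states more tersely (the isomorphism part being handled in the text preceding the theorem), so there is nothing to correct.
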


\begin{proof}
Suppose that there exists a $\nu\in\mathbb N\cup\{0\}$ such that
$\gamma_{u_\nu}\ne0$ and denote \[r=\max\{\nu\in\mathbb
N\cup\{0\}\mid \gamma_{u_\nu}\ne0\}.\]

For even $n$ vanishing of the coefficient at $u_{n+r}$
in~\eqref{EqForCharsOfLinEvolEqs} yields the equation $2A^n
\gamma_{u_r}=0$, whence $\gamma_{u_r}=0$. This contradicts the original
assumption, and hence $\gamma$ depends only on~$t$ and~$x$.
\end{proof}

\begin{corollary}\label{peolin}
For any linear $(1+1)$-dimensional evolution equation of even order
its space of conservation laws is exhausted by linear ones
and is isomorphic to the solution space of the corresponding adjoint equation.
\end{corollary}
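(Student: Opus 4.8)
The plan is to show that assigning to each conservation law its characteristic gives a linear isomorphism between $\CL(\mathcal E)$ and the space of cosymmetries of $\mathcal E$, and then to combine Theorem~\ref{lin-cl-th1} with the explicit construction preceding it so as to identify this space with the solution space of the adjoint equation and to exhibit, for every conservation law, a linear conserved vector.

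First I would recall that every $\mathcal L\in\CL(\mathcal E)$ has a characteristic $\gamma_{\mathcal L}=\delta\rho/\delta u$, computed from any $(\rho,\sigma)\in\mathcal L$, and that this characteristic is a cosymmetry satisfying the self-adjointness condition $\gamma_*=\gamma_*^\dagger$. The map $\mathcal L\mapsto\gamma_{\mathcal L}$ is obviously linear, and it is injective: if $\gamma_{\mathcal L}=0$ then $\delta\rho/\delta u=0$, hence $\rho\in\operatorname{Im}D_x$, so $(\rho,\sigma)$ is a trivial conserved vector by the triviality criterion of Section~\ref{SectionOnAuxiliaryStatements}, whence $\mathcal L=0$.

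Next I would invoke Theorem~\ref{lin-cl-th1}: for an even-order equation~\eqref{lineq} every cosymmetry is a function $v=v(t,x)$, and such functions are precisely the solutions of the adjoint equation~$\mathcal E^*$ given by~\eqref{ale}. For any such $v$ the Fr\'echet derivative $v_*$ vanishes identically, so the condition $v_*=v_*^\dagger$ holds automatically; moreover, the conserved vector $(\rho,\sigma)$ with $\rho=v(t,x)u$ and $\sigma=\sum_{i=0}^{n-1}\sigma^i u_i$, where the $\sigma^i$ are determined recursively by~\eqref{sigma-lin}, satisfies $D_t\rho+D_x\sigma=0\bmod\check{\mathcal E}$ and has $\delta\rho/\delta u=v$. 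Hence the characteristic map $\mathcal L\mapsto\gamma_{\mathcal L}$ is surjective onto the solution space of~$\mathcal E^*$, and therefore it is a linear isomorphism $\CL(\mathcal E)\xrightarrow{\ \sim\ }\{\,v=v(t,x): v_t+\mathfrak F^\dagger v=0\,\}$.

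Finally, since every $\mathcal L\in\CL(\mathcal E)$ has characteristic of the form $v(t,x)$, the conserved vector $(vu,\sigma)$ described above represents $\mathcal L$ and is linear in the totality of the variables $u_0,u_1,u_2,\dots$; thus $\mathcal L$ is a linear conservation law, and $\CL(\mathcal E)$ is exhausted by linear ones. The only step requiring genuine care is the injectivity of the characteristic map, which rests on the triviality criterion $\rho\in\operatorname{Im}D_x$; everything else is an immediate consequence of Theorem~\ref{lin-cl-th1} and the recursion~\eqref{sigma-lin}.
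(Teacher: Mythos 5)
Your proposal is correct and follows essentially the same route the paper intends: the corollary is obtained by combining Theorem~\ref{lin-cl-th1} with the facts recalled in Sections~\ref{SectionOnAuxiliaryStatements} and~\ref{SectionOnCLsOfLinEvolEqs} (characteristics are cosymmetries, a conservation law is determined by its characteristic via the triviality criterion $\rho\in\operatorname{Im}D_x$, and each solution $v(t,x)$ of the adjoint equation yields a linear conserved vector through the recursion~\eqref{sigma-lin}). Your explicit verification of injectivity and surjectivity of the characteristic map just spells out what the paper leaves implicit, so no further comment is needed.
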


For odd $n$ things become somewhat more involved.

\begin{theorem}\label{lin-cl-th2}
For any linear $(1+1)$-dimensional evolution equation of odd order,
all its cosymmetries are affine in the totality of variables $u_0,u_1,u_2,\dots$.
\end{theorem}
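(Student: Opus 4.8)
The plan is to mimic the structure of the proof of Theorem~\ref{lin-cl-th1}, but to push the analysis one degree further since the leading-order cancellation that killed all $u$-dependence in the even-order case now fails. As before, suppose $\gamma$ is a cosymmetry of~\eqref{lineq} with $n=2m+1$ odd, assume $\gamma$ genuinely depends on some $u_\nu$, and put $r=\max\{\nu\mid\gamma_{u_\nu}\ne0\}$. The key observation is that in~\eqref{EqForCharsOfLinEvolEqs} the highest-order variable that can appear is $u_{n+r}$, and its coefficient comes from two sources: the term $\gamma_{u_r}\,A^n u_{n+r}$ coming from $\sum_k\gamma_{u_k}\sum_{i,j}\binom kj A^i_{k-j}u_{i+j}$ (taking $k=r$, $i=n$, $j=0$), and the term $(-1)^n A^n D_x^n\gamma$ from the last sum, whose top piece is $-A^n\gamma_{u_r}u_{n+r}$ (using $(-1)^n=-1$ for odd $n$ and that $D_x^n$ applied to $\gamma$ produces $\gamma_{u_r}u_{n+r}$ plus lower-order terms). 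These two contributions \emph{cancel identically}, which is exactly why the even-order argument breaks down here.

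So the next step is to collect the coefficient of the next variable, $u_{n+r-1}$, and set it to zero. The contributions are: from the first sum, $\gamma_{u_r}$ multiplied by the appropriate binomial-weighted $x$-derivatives of $A^n$ and $A^{n-1}$, plus $\gamma_{u_{r-1}}A^n$; from the adjoint-operator sum, the subleading terms of $-A^n D_x^n\gamma$ together with $A^{n-1}D_x^{n-1}\gamma$ and so on. The crucial point will be that after the identical cancellation of the $u_{n+r}$-coefficient, the surviving coefficient of $u_{n+r-1}$ is (up to a nonzero multiple of $A^n$, and possibly lower-order $A^i$) proportional to $\p\gamma_{u_r}/\p u_{r-1}$, i.e.\ to $\gamma_{u_ru_{r-1}}$, or more precisely to $\gamma_{u_ru_k}$ for the relevant $k$. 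Forcing this to vanish gives $\gamma_{u_ru_k}=0$, and then descending through the coefficients of $u_{n+r-2},u_{n+r-3},\dots$ one shows successively that $\gamma_{u_ru_j}=0$ for all $j$, hence $\gamma_{u_r}$ depends only on $t$ and $x$. This is the meaning of ``affine in the totality of the $u_j$'': $\gamma=\sum_j b_j(t,x)u_j+c(t,x)$ once we know each $\gamma_{u_r}$ is $u$-independent and argue inductively on $r$.

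The main obstacle — and the step that will require genuine care rather than routine bookkeeping — is verifying that the coefficient of $u_{n+r-1}$ (and of the subsequent lower variables, if a finite descent is needed) really does isolate a second-derivative term $\gamma_{u_ru_\bullet}$ with a nonvanishing coefficient, rather than being an identity that cancels like the $u_{n+r}$-coefficient did. This amounts to a careful comparison of the binomial coefficients $\binom kj$ arising from $D_x$ hitting $\gamma$ inside $\gamma_{u_k}u_{i+j}$ against those from $(-D_x)^i\circ A^i$ acting on $\gamma$ in the adjoint sum; the parity $(-1)^n=-1$ and the Leibniz expansion of $D_x^n\gamma=\sum_{k}\gamma_{u_k}u_{n+k}+(\text{terms with two }u\text{-derivatives of }\gamma)$ are what make the first cancellation exact at top order and \emph{inexact} one order down. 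Once that algebraic fact is nailed, the induction on $r$ closes the argument exactly as in Theorem~\ref{lin-cl-th1}: the maximal-order assumption is contradicted unless $\gamma$ is affine. One should also note at the end that, by contrast with the even case, this does \emph{not} force $u$-independence — the linear-in-$u_j$ cosymmetries genuinely survive and correspond, via the self-adjointness criterion $\gamma_*=\gamma_*^\dagger$, to the quadratic conservation laws announced in the introduction.
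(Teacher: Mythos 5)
Your overall strategy is indeed the paper's: for odd $n$ the coefficient of $u_{n+r}$ in \eqref{EqForCharsOfLinEvolEqs} cancels identically, so one passes to the coefficient of $u_{n+r-1}$, concludes that $\gamma_{u_r}$ depends only on $t$ and $x$, and then descends. But the step you yourself flag as the crux is both left unverified and mis-stated, and in the form you state it, it would not yield the conclusion. The coefficient of $u_{n+r-1}$ is \emph{not} proportional to a single second derivative $\gamma_{u_ru_{r-1}}$ (or some $\gamma_{u_ru_k}$). When one actually collects it, the $\gamma_{u_{r-1}}$-contributions coming from $\gamma_*F$ (the term $A^n\gamma_{u_{r-1}}u_{n+r-1}$ with $k=r-1$, $i=n$, $j=r-1$) and from $(-1)^nA^nD_x^n\gamma$ cancel, and what survives is a relation of the form $nA^nD_x\gamma_{u_r}=c(t,x)\,\gamma_{u_r}$, with $c$ built from $A^{n-1}$ and $A^n_x$ --- this is exactly the displayed equation in the paper's proof. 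The conclusion $\gamma_{u_r}=\gamma_{u_r}(t,x)$ then follows by comparing orders: if $\gamma_{u_r}$ depended on some $u_q$ with $q$ maximal, the left-hand side would contain $nA^n\gamma_{u_ru_q}u_{q+1}$ while the right-hand side has order at most $q$. In particular, \emph{all} the conditions $\gamma_{u_ru_j}=0$ come from this one coefficient (split further with respect to the lower-order variables), not from the coefficients of $u_{n+r-2},u_{n+r-3},\dots$ as you assert; and if the coefficient really were just $\gamma_{u_ru_{r-1}}$, its vanishing would only give independence of $u_{r-1}$ and the argument would stall.

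The second gap is the passage from ``$\gamma_{u_r}$ depends only on $t,x$'' to affinity. ``Argue inductively on $r$'' cannot proceed by subtracting the leading term, because $g^r(t,x)u_r$ is in general not itself a cosymmetry, so $\gamma-g^r u_r$ need not satisfy \eqref{EqForCharsOfLinEvolEqs}. What is actually needed --- and what the coefficients of $u_{n+r-2},u_{n+r-3},\dots$ are really for --- is the descent the paper performs: once $\gamma_{u_r}=g^r(t,x)$ is known, the coefficient of $u_{n+r-2}$ gives $nA^nD_x\gamma_{u_{r-1}}=c(t,x)\,\gamma_{u_{r-1}}+\psi^{r-1}(t,x)$, where $\psi^{r-1}$ depends only on $t,x$ (through $g^r$ and the $A^i$), and the same order comparison shows $\gamma_{u_{r-1}}=\gamma_{u_{r-1}}(t,x)$; iterating over the lower coefficients yields $\gamma=\sum_k g^k(t,x)u_k+v(t,x)$. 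So the skeleton matches the paper, but the two load-bearing computations --- the exact structure of the $u_{n+r-1}$-coefficient and the inhomogeneous descent that closes the induction --- are missing or misidentified in your proposal.
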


\begin{proof}
In contrast with the case of even $n$, now the coefficient at $u_{n+r}$
in~\eqref{EqForCharsOfLinEvolEqs} vanishes identically. Requiring the coefficient at $u_{n+r-1}$
in~\eqref{EqForCharsOfLinEvolEqs} to vanish yields
\[
n A^n_0 D_x\gamma_{u_r}=(2A^{n-1}_0+(r-1-n)A^n_1)\gamma_{u_r},
\]
so $\gamma_{u_r}$ depends only on~$t$ and~$x$.
Using this result while evaluating the coefficient at $u_{n+r-2}$ yields
\[
n A^n_0
D_x\gamma_{u_{r-1}}=(2A^{n-1}_0+(r-1-n)A^n_1)\gamma_{u_{r-1}}+\psi^{r-1},
\]
where $\psi^{r-1}$ is a function of~$t$ and~$x$, which is expressed via $\gamma_{u_r}$ and $A^i$;
the explicit form  of~$\psi^{r-1}$ is not important here.
Thus, $\gamma_{u_{r-1}}$ also depends only on~$t$ and~$x$.
Iterating the above procedure allows us to conclude that
the function~$\gamma$ is affine in $u_0,\dots,u_r$, that is,
\begin{equation}\label{gammarep}
\gamma=\Gamma u+v(t,x), \quad \Gamma=\sum_{k=0}^r g^k(t,x)D_x^k,
\end{equation}
and the result follows.
\end{proof}

Note that if we restrict ourselves to the case of polynomial cosymmetries for
equations with constant coefficients, then upon combining Theorems~\ref{lin-cl-th1} and~\ref{lin-cl-th2}
we recover (part of) Proposition~1 of~\cite{gal}.

For the equations (\ref{lineq}) with $\mathfrak F^\dagger=-\mathfrak F$
the determining equations for cosymmetries and for characteristics of generalized
symmetries coincide. This observation in conjunction with~Theorem~\ref{lin-cl-th2}
implies the following assertion (cf.~\cite{shsh}).

\begin{corollary}\label{asa}
For any linear $(1+1)$-dimensional evolution equation (\ref{lineq})
of odd order such that $\mathfrak F^\dagger=-\mathfrak F$, all its
generalized symmetries are affine in $u_j$ for all $j$.
\end{corollary}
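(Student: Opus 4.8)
The plan is to capitalize on the observation stated just above the corollary: under the hypothesis $\mathfrak F^\dagger=-\mathfrak F$ the determining equation for characteristics of generalized symmetries of~\eqref{lineq} is literally the same as the determining equation~\eqref{cosym} for cosymmetries, so Theorem~\ref{lin-cl-th2} applies verbatim. First I would set up the symmetry side in parallel with the cosymmetry side treated in Section~\ref{SectionOnCLsOfLinEvolEqs}. Exactly as for cosymmetries (and by the same standard reasoning), one loses no generality in assuming that the characteristic~$\eta$ of a generalized symmetry of~\eqref{lineq} depends only on~$t$,~$x$ and a finite number of the $x$-derivatives~$u_j$, i.e., on no $t$- or mixed derivatives. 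Such an~$\eta$ generates a generalized symmetry of $u_t=F$ if and only if $\eta_t+\eta_* F-F_*\eta=0\bmod\check{\mathcal E}$.

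Next I would use the linearity of $F=\mathfrak F\,u$: its Fr\'echet derivative is $F_*=\sum_{i=0}^n A^i(t,x)D_x^i=\mathfrak F$, and consequently $F_*^\dagger=\mathfrak F^\dagger$. Hence the symmetry equation becomes $\eta_t+\eta_*\mathfrak F\,u-\mathfrak F\eta=0\bmod\check{\mathcal E}$, whereas~\eqref{cosym} for cosymmetries reads $\gamma_t+\gamma_*\mathfrak F\,u+\mathfrak F^\dagger\gamma=0\bmod\check{\mathcal E}$. Under the assumption $\mathfrak F^\dagger=-\mathfrak F$ these two equations coincide after the renaming $\gamma\leftrightarrow\eta$; equivalently, writing the symmetry equation out in components in the same way that~\eqref{EqForCharsOfLinEvolEqs} was obtained produces, for~$\eta$, precisely equation~\eqref{EqForCharsOfLinEvolEqs} with coefficients~$A^i$ constrained by $\mathfrak F^\dagger=-\mathfrak F$. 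Thus the characteristics of generalized symmetries of~\eqref{lineq} are exactly the solutions of~\eqref{EqForCharsOfLinEvolEqs}, i.e., the cosymmetries.

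Finally, since~\eqref{lineq} has odd order, Theorem~\ref{lin-cl-th2}---whose proof uses only equation~\eqref{EqForCharsOfLinEvolEqs} together with the parity of~$n$---guarantees that every such~$\eta$ is affine in the totality of variables $u_0,u_1,u_2,\dots$, which is exactly the claim. The step I expect to require the most care is the verification in the second paragraph: confirming that $F_*=\mathfrak F$ (hence $F_*^\dagger=\mathfrak F^\dagger$) and that the two determining equations genuinely agree term by term once $\mathfrak F^\dagger=-\mathfrak F$ is imposed, together with checking that the reduction to characteristics depending only on $t$, $x$ and the $u_j$---the setting in which Theorem~\ref{lin-cl-th2} is stated---is equally legitimate on the generalized-symmetry side. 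Once this is in place the conclusion is an immediate appeal to Theorem~\ref{lin-cl-th2}.
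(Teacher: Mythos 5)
Your proposal is correct and follows essentially the same route as the paper: the paper's own argument is precisely the observation that for $F=\mathfrak F\,u$ one has $F_*=\mathfrak F$ and $F_*^\dagger=\mathfrak F^\dagger=-\mathfrak F$, so the determining equations for cosymmetries and for characteristics of generalized symmetries coincide, and Theorem~\ref{lin-cl-th2} then yields the affineness. You merely spell out the term-by-term verification that the paper compresses into one sentence, which is fine.
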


Now let us get back to the general case of \eqref{lineq} with odd $n$. 
Substituting the representation \eqref{gammarep} for $\gamma$
into~\eqref{EqForCharsOfLinEvolEqs} reveals that 
$v=v(t,x)$ satisfies the adjoint equation~(\ref{ale}) which is decoupled from the equations 
for $g^i$. Thus, $v$ is a cosymmetry {\em per se}.
Just as before, to any such cosymmetry 
there corresponds a linear conservation law with the density
$\rho=v(t,x) u$. However, the issue of existence of conservation laws associated
with cosymmetries linear in $u_j$ is nontrivial. 

Indeed, let $\gamma=\Gamma u$. As we want $\gamma$ to be a characteristic of a conservation law,
we should require that $\gamma\in\mathop{\rm Im}\delta/\delta u$
(cf.\ Section~\ref{SectionOnAuxiliaryStatements}).
Hence, the operator~$\Gamma$ should be formally self-adjoint and, in particular, its order should be even
(note, however, that if $\Gamma$ is not formally self-adjoint, we can take its formally self-adjoint part
$\tilde\Gamma=(\Gamma+\Gamma^\dagger)/2$; $\tilde\gamma=\tilde\Gamma u$ is easily verified to be a cosymmetry if so is $\gamma$).
The density of the conservation law associated with the characteristic~$\gamma$
reads, up to the usual addition of a total $x$-derivative of something,
$\rho=\frac12u\Gamma u$.
Without loss of generality we can also assume the corresponding flux
to be quadratic in $u_0$, $u_1$, \dots, see Theorem~5.104 of~\cite{Olver1993}, so the
conservation law in question is \emph{quadratic}, and we obtain the following result.

\begin{theorem}\label{poolin}
For any linear $(1+1)$-dimensional evolution equation of odd order,
the space of its conservation laws is spanned by linear and quadratic ones.
\end{theorem}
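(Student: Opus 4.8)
The plan is to leverage everything established in the odd-order discussion immediately preceding the statement. By Theorem~\ref{lin-cl-th2}, every cosymmetry $\gamma$ of the odd-order equation~\eqref{lineq} is affine, hence has the form $\gamma=\Gamma u+v(t,x)$ as in~\eqref{gammarep}. A conservation law has a well-defined characteristic, so it suffices to classify the characteristics, i.e.\ those cosymmetries $\gamma$ satisfying the self-adjointness condition $\gamma_*=\gamma_*^\dagger$ from Section~\ref{SectionOnAuxiliaryStatements}. First I would observe that the decomposition $\gamma=\Gamma u+v$ is additive in a way that respects this classification: the constant-coefficient-in-$u$ part $v(t,x)$ is, as noted in the paragraph before the theorem, always a cosymmetry in its own right and always the characteristic of a \emph{linear} conservation law (density $\rho=vu$, flux given recursively by~\eqref{sigma-lin}). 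So the only remaining question is the part $\gamma=\Gamma u$ that is genuinely linear (homogeneous) in the $u_j$.

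Next I would treat the homogeneous piece $\gamma=\Gamma u$. The Fréchet derivative of $\gamma=\Gamma u$ is simply $\gamma_*=\Gamma$ (acting as a differential operator in $x$ with $t$-dependent coefficients), so the condition $\gamma_*=\gamma_*^\dagger$ becomes $\Gamma=\Gamma^\dagger$, i.e.\ $\Gamma$ is formally self-adjoint; in particular its order $r$ is even. For such $\Gamma$, the standard homotopy/integration-by-parts formula (the one underlying $\delta(\tfrac12 u\Gamma u)/\delta u=\Gamma u$ for self-adjoint $\Gamma$) gives that $\gamma=\Gamma u$ is the variational derivative of the density $\rho=\tfrac12 u\Gamma u$, which is manifestly quadratic in $u_0,u_1,\dots$. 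Then I would invoke Theorem~5.104 of~\cite{Olver1993} to arrange the associated flux $\sigma$ to be quadratic as well, so the whole conserved vector $(\rho,\sigma)$ is quadratic — this is already spelled out in the run-up to the statement and can simply be cited. Finally, for a general characteristic $\gamma=\Gamma u+v$, one checks that self-adjointness of $\gamma_*$ forces self-adjointness of $\Gamma$ separately (the $v$-part contributes nothing to the Fréchet derivative), so $\gamma$ decomposes as the sum of the characteristic of a quadratic conservation law and that of a linear one; passing to equivalence classes, every conservation law is a sum of a linear and a quadratic one, which is exactly the asserted spanning statement.

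The one point needing slight care — and what I expect to be the only real obstacle — is the logical direction: the theorem asserts the space of conservation laws is \emph{spanned} by linear and quadratic ones, so I must be sure I am not merely reparametrising candidate characteristics but actually producing genuine conservation laws. The subtlety flagged in the text (``the issue of existence of conservation laws associated with cosymmetries linear in $u_j$ is nontrivial'') is resolved precisely by the self-adjointness criterion: a cosymmetry $\gamma$ is a characteristic iff $\gamma\in\operatorname{Im}\delta/\delta u$ iff $\gamma_*$ is self-adjoint, and when this holds we have \emph{explicitly exhibited} the density $\rho=\tfrac12 u\Gamma u+vu$ (modulo $\operatorname{Im}D_x$) whose variational derivative is $\gamma$. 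So there is no hidden obstruction: every characteristic of a conservation law of an odd-order linear equation arises this way, and each such conservation law has a representative conserved vector that is a sum of a linear and a quadratic one. Assembling these observations — Theorem~\ref{lin-cl-th2} for the affine structure, the self-adjointness reduction from Section~\ref{SectionOnAuxiliaryStatements}, the quadratic-density formula, and Theorem~5.104 of~\cite{Olver1993} for the flux — yields the claim with essentially no further computation.
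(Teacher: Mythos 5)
Your proposal is correct and follows essentially the same route as the paper: Theorem~\ref{lin-cl-th2} gives the affine form $\gamma=\Gamma u+v$, the $v$-part yields a linear conservation law with density $vu$, and for the homogeneous part the self-adjointness criterion $\gamma_*=\gamma_*^\dagger$ forces $\Gamma=\Gamma^\dagger$, giving the quadratic density $\tfrac12 u\Gamma u$ with a quadratic flux via Theorem~5.104 of \cite{Olver1993}. Your extra care about splitting the characteristic into two separate characteristics is a harmless elaboration of what the paper does implicitly.
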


For linear conservation laws with the densities of the form $\rho=v(t,x)u$ where $v$ solves the adjoint equation
we still have (\ref{sigma-lin}).

Now turn to the quadratic conservation laws.
The differential function $\Gamma u$ is a characteristic of a conservation law for~$\mathcal E$ if and only if
the operator~$\Gamma$ satisfies the following equivalent conditions:

1) it maps the solutions of the equation~$\mathcal E$ into solutions of the adjoint equation~$\mathcal E^*$;

2) $\p\Gamma/\p t+\Gamma\mathfrak F+\mathfrak F^\dagger\Gamma=0$;

\nopagebreak

3) $(\p_t+\mathfrak F^\dagger)\Gamma=\Gamma(\p_t-\mathfrak F)$,
i.e., the operator $\Gamma(\p_t-\mathfrak F)$ is formally skew-adjoint.

Note that if the operator~$\mathfrak F$ is formally skew-adjoint ($\mathfrak
F^\dagger=-\mathfrak F$) then the operators $\p_t-\mathfrak F$ and
$\Gamma$ commute: $[\p_t-\mathfrak F,\Gamma]=0$, i.e., $\Gamma$ is a
symmetry operator for the equation~$\mathcal E$.

Any linear equation admits a symmetry $u\partial_u$, and the
associated operator $\Gamma$ is the identity operator which is
obviously formally self-adjoint. Combining this result with the
above we obtain the following assertion.

\begin{proposition}
Any linear $(1+1)$-dimensional evolution equation (\ref{lineq}) of
odd order such that $\mathfrak F^\dagger=-\mathfrak F$ possesses a
conservation law with the density $\rho=u^2$.
\end{proposition}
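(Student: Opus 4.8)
The plan is to verify that for a formally skew-adjoint operator $\mathfrak F$ the identity operator $\Gamma = \mathrm{id}$ satisfies condition~2) above, and then to identify the density supplied by the general quadratic-conservation-law construction. First I would recall from the discussion preceding Theorem~\ref{poolin} that a differential operator $\Gamma$ gives a characteristic $\gamma = \Gamma u$ of a conservation law of~\eqref{lineq} precisely when $\partial\Gamma/\partial t + \Gamma\mathfrak F + \mathfrak F^\dagger\Gamma = 0$, and that the associated density is $\rho = \tfrac12 u\,\Gamma u$ modulo a total $x$-derivative.

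The actual verification is short. Take $\Gamma$ to be the identity operator, which is independent of $t$, so $\partial\Gamma/\partial t = 0$, and which is formally self-adjoint, so $\Gamma = \Gamma^\dagger$. Condition~2) then becomes $\mathfrak F + \mathfrak F^\dagger = 0$, which is exactly the hypothesis $\mathfrak F^\dagger = -\mathfrak F$. Hence $\gamma = u$ is a characteristic of a conservation law of~$\mathcal E$, and by the recipe above the corresponding density is $\rho = \tfrac12 u\cdot u = \tfrac12 u^2$; discarding the irrelevant constant factor (or, if one prefers, noting that $\rho$ is defined only up to addition of an element of $\mathop{\rm Im} D_x$ and up to overall rescaling of the conservation law) we obtain a conservation law with density $\rho = u^2$. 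One can equally well argue through condition~3): since $\mathfrak F^\dagger = -\mathfrak F$, the operator $\partial_t - \mathfrak F$ is already formally skew-adjoint, hence so is $\Gamma(\partial_t - \mathfrak F) = \partial_t - \mathfrak F$ when $\Gamma = \mathrm{id}$.

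There is essentially no obstacle here: the only point that deserves a word is why the quadratic density $\rho = \tfrac12 u\,\Gamma u$ may be replaced by $\rho = u^2$, and this is immediate because a conservation law can be multiplied by any nonzero constant without changing its triviality class, so the factor $\tfrac12$ is harmless. Thus the proof amounts to substituting $\Gamma = \mathrm{id}$ into the criterion already established in the paragraph preceding the Proposition and reading off the density, which is precisely what is claimed.
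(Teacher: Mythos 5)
Your proof is correct and follows essentially the same route as the paper: the paper also takes $\Gamma$ to be the identity operator (viewed as the operator of the symmetry $u\partial_u$), notes it is formally self-adjoint, and uses the criterion and the density formula $\rho=\tfrac12 u\,\Gamma u$ from the preceding paragraph, the hypothesis $\mathfrak F^\dagger=-\mathfrak F$ making the characteristic condition hold. Your remark on the harmless factor $\tfrac12$ is fine.
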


Moreover, linear $(1+1)$-dimensional evolution equations of odd order can possess
infinite series of quadratic conservation laws of arbitrarily high orders, as illustrated by the following example.

\begin{example}
Consider the equation
\begin{equation}\label{e3}
u_t=u_{xxx}.
\end{equation}
It is straightforward to verify that in this case the determining equations
for cosymmetries and characteristics of (generalized) symmetries coincide
because Eq.~\eqref{e3} is identical with its adjoint.

Denote by $\mathcal{S}$ the space of all generalized symmetries of~\eqref{e3}
and let $\mathcal{Q}$ be the space of symmetries of the form $f(t,x)\p_u$, where $f$
solves \eqref{e3}: $f_t=f_{xxx}$.
By Corollary~\ref{asa}
the quotient space $\mathcal{S}/\mathcal{Q}$ is exhausted by linear generalized symmetries.
Successively solving the determining equations (cf.\ e.g.\ \cite{agn}) we find that
the space $\mathcal{S}/\mathcal{Q}$ is spanned by the symmetries of the form
$(D_x^k \Upsilon^l\,u)\p_u$, where $k,l=0,1,2,\dots$ and $\Upsilon=x + 3 t D_x^2$.

As the determining equations for symmetries and cosymmetries of~\eqref{e3} coincide,
the space of cosymmetries for \eqref{e3} is spanned by the following objects:

1) the cosymmetries of the form $f(t,x)$ where $u=f(t,x)$ is any solution of~\eqref{e3};

2) the cosymmetries of the form $D_x^k \Upsilon^l\,u$, where $k,l=0,1,2,\dots$.

Any cosymmetry of the first kind is associated with a conservation
law with the density $\rho=f(t,x) u$. As for cosymmetries of the
second kind, only those with even $k=2m$ are
characteristics of the conservation laws. The conservation laws in
question can (modulo trivial ones) be chosen to be quadratic,
with the densities $\rho_{lm}=\frac12 u\,D_x^{m} \Upsilon^l
D_x^{m}u$ and the density orders $l+m$, $l,m=0,1,2\dots$.
\end{example}

However, there also exist linear $(1+1)$-dimensional evolution equations
of odd order which have no quadratic conservation laws.

\begin{example}
The operator $\mathfrak F=D_x^3+x$ associated with the equations
\begin{equation}\label{e4}
u_t=u_{xxx}+xu
\end{equation}
is not formally skew-adjoint. Equation~\eqref{e4} possesses
nontrivial symmetries which are linear combinations of the operators
$((D_x^3+x)^k(D_x+t)^l\,u)\p_u$ but they cannot be employed for
construction of quadratic conservation laws of~\eqref{e4} in the
above fashion.

In fact, all cosymmetries of~\eqref{e4} depend only on $x$ and $t$, and
therefore this equation has no quadratic conservation laws.

Indeed,
using the proof by contradiction,
suppose that \eqref{e4} has a
cosymmetry $\gamma=\Gamma u$, and $\ord\gamma=r$, i.e., $g^r\ne0$.
The condition $\p\Gamma/\p t+\Gamma\mathfrak F+\mathfrak
F^\dagger\Gamma=0$ implies the following system of determining
equations for the coefficients of~$\Gamma$:
\begin{gather}\label{e4DetEq1}
3g^i_x=(i+3)g^{i+3}+g^{i+2}_t-g^{i+2}_{xxx}+2xg^{i+2}-3g^{i+1}_{xx},\quad i=1,\dots,r,
\\ \label{e4DetEq2}
2g^2+g^1_t-g^1_{xxx}+2xg^1-3g^0_{xx}=0,
\\ \label{e4DetEq3}
g^1+g^0_t-g^0_{xxx}+2xg^0=0,
\end{gather}
where the functions $g^{r+3}$, $g^{r+2}$ and $g^{r+1}$ vanish by
definition. We successively integrate \eqref{e4DetEq1} starting from
the equations with the greatest value of $i$ and going down.
The equations for $i=r$ and $i=r-1$ imply that the
coefficients $g^r$ and $g^{r-1}$ depend on $t$ but not on $x$.
Proceeding by induction, we find that for any $j=0,\dots,r$ the
function $g^{r-j}$ is a polynomial in~$x$ of
degree~$2[j/2]$. The ratio of the coefficient at the highest power of $x$ in
$g^{r-j}$ to $g^r$ (resp.\ $g^{r-1}$) is a constant if $j$
is even (resp.\ odd). Then \eqref{e4DetEq2} and~\eqref{e4DetEq3}
imply $g^r=0$ and $g^{r-1}=0$. This contradicts our assumption that
$g^r\ne0$, and the result follows.
\end{example}

\section{Conclusions}\label{SectionConclusion}

In this paper we have presented normal forms for the evolution equations
in two independent variables possessing low-order conservation laws, see
Theorems~\ref{TheoremOnOneConsLawOfEvolEqs} and \ref{TheoremOnTwoConsLawOfEvolEqs},
and Corollaries
\ref{CorollaryOnFormOfEvolEqsWithOne1stOrderConsLaw}--
\ref{CorollaryOnFormOfEvolEqsWithTwo1thAndLowOrderConsLaws2}
for details. Using these normal forms considerably
simplifies the construction of nonlocal variables associated
with the conservation laws in question and hence the study of the Abelian
coverings and nonlocal symmetries, including potential symmetries,
for the equations in question in spirit of
\cite{Bocharov_et_al_1997, Popovych&Ivanova2004CLsOfNDCEs,
Popovych&Kunzinger&Ivanova2008,Kunzinger&Popovych2008,Sergyeyev2000},
and references therein. As these
normal forms are associated, up to a certain natural equivalence (see Remark 1),
with the {\em subspaces} spanned by conservation laws rather
than conservation laws {\em per se}, we are naturally led to pose the problem
of classification of inequivalent subspaces of (low-order) conservation laws
for the classes or special cases of evolution equations of interest.

As for the {\em linear} evolution equations in two independent variables,
we have shown that their conservation laws are (modulo trivial conserved vectors, of course)
at most quadratic in the dependent variable and the derivatives thereof, see Theorem~\ref{lin-cl-th1}.
Moreover,  for the linear evolution equations of even order their conservation laws are at most linear
in these quantities, and the associated densities can be chosen to have the form of a product
of the dependent variable with a solution of the adjoint equation (Theorem~\ref{lin-cl-th2}).
It is natural to ask whether similar results can be obtained for more general linear PDEs (cf.\ \cite{shsh}
for the case of symmetries),
and we intend to address this issue in our future work.

\subsection*{Acknowledgements}

The research of R.O.P.\ was supported by the project P20632 of the Austrian Science Fund.
The research of A.S. was supported in part by the Ministry of Education, Youth and Sports of the Czech Republic
(M\v SMT \v CR) under grant MSM 4781305904, and by Silesian University in Opava under grant IGS 2/2009.
The authors are pleased to thank M.~Kunzinger for stimulating discussions. A.S.\ gratefully acknowledges
the warm hospitality extended to him by the Department of Mathematics of the University of Vienna
during his visits in the course of preparation of the present paper.

It is our great pleasure to thank the referees for useful suggestions that have considerably improved the paper.

\end{document}